\newcommand{\R}{\mathbb{R}}
\newcommand{\mb}[1]{\ensuremath{\boldsymbol{#1}}}
\newcommand{\cK}{\mathcal{K}}
\renewcommand{\epsilon}{\varepsilon}
\newtheoremstyle{mythmstyle}
	{\topsep}
	{\topsep}
	{\itshape}
	{}
	{\scshape}
	{.}
	{3pt}
	{}
\theoremstyle{mythmstyle}
\newtheorem{nn}{}[section]
\newtheorem{lemma}[nn]{Lemma}
\newtheorem{prop}[nn]{Proposition}
\newtheorem{claim}[nn]{Claim}
\newtheorem{REMARK}[nn]{Remark}
\newcommand{\comment}[1]{}
\numberwithin{equation}{section}
\title{Additional Results and Extensions for the paper ``Probabilistic bounds on the $k$-Traveling Salesman Problem and the Traveling Repairman Problem''}
\author[1]{M. Blanchard}
\author[1]{A. Jacquillat}
\author[1]{P. Jaillet}
\affil[1]{Massachusetts Institute of Technology, Cambridge, MA, USA}
\date{} 
\begin{document}

\maketitle

We study two variants of the classical traveling salesman problem (TSP). Given $n$ points, the TSP seeks a tour of minimal length visiting all $n$ points. In contrast, we focus on
\begin{itemize}
    \item the $k$-TSP which seeks a path of minimal length visiting $k$ out of $n$ points, where $k\leq n$. Formally, if $x_1,\ldots, x_k$ is the service order, the objective to minimize is the path length
    \begin{equation*}
        \sum_{i=1}^{k-1} |x_{i+1}-x_i|.
    \end{equation*}
    \item the traveling repairman problem (TRP) which seeks a tour visiting all $n$ points that minimize the sum of latencies (or waiting time) for each point. Formally, if $x_1,\ldots, x_n$ defines a service order, the latency at point $x_i$ is defined as $l_i = \sum_{j=1}^{i-1}|x_{j+1}-x_j|$ and the objective is to minimize the total latency
    \begin{equation*}
        \sum_{i=1}^n l_i = \sum_{i=1}^{n-1} (n-i) |x_{i+1}-x_i|.
    \end{equation*}
\end{itemize}
We consider a probabilistic setting where $n$ points $X_1,\ldots, X_n$ are sampled independently and identically from some distribution on a compact $\cK\subset \R^2$.

In \cite{blanchard2021probabilistic}, we provided \emph{constant-factor} probabilistic approximations of both problems, i.e., bounds on the expected optimal objective value that hold within a universal constant factor, as well as constant-factor approximation algorithms. Precisely, we show that the optimal length of the $k$-TSP path (non-asymptotically) grows at a rate of $\Theta\left(k/n^{\frac{1}{2}(1+\frac{1}{k-1})}\right)$ and that a constant-factor approximation scheme can be obtained by solving the TSP in a high-concentration zone, leveraging large deviations of local point concentration. Next, we show that the optimal TRP objective follows an asymptotic rate $\Theta(n\sqrt n)$ with a prefactor that depends on the density $f$ of the absolutely-continuous part of the point distribution. This generalizes the classical Beardwood-Halton-Hammersley theorem to the latency-minimization objective in the TRP. The resulting constant-factor approximation scheme visits local regions of the space by decreasing order of probability density $f$. Last, we propose fairness-enhanced versions of the $k$-TSP and the TRP to balance efficiency and fairness.

In this companion report, we provide two additional contributions.
\begin{enumerate}
    \item We extend the $k$-TSP results to the case with general densities. In Section \ref{sec:k-TSP}, we show that the results obtained in \cite{blanchard2021probabilistic} with continuous densities can be extended via smoothing techniques. We also discuss the case of $k=\Omega(n)$, in which case the $k$-TSP path becomes non-local and recovers similar behavior to that of the TRP tour---visiting zones by decreasing order of density until $k$ points are visited.
    \item For the TRP, we propose a utility-based notion of fairness in Section \ref{sec:psi-TRP}. Instead of assuming that the dissatisfaction (or negative utility) of customers is linear in their latency/waiting time, we consider the case where the utility is a convex function $\Psi$ of their latency. A fair solution aims to minimize total dissatisfaction, which we refer to as the $\Psi$-TRP solution. For polynomial functions $\Psi$, we give constant-factor approximations of the optimal $\Psi$-TRP objective, thus extending the TRP bounds to non-linear utility. Further, we show that the approximation scheme for the TRP given in \cite{blanchard2021probabilistic} can be efficiently adapted to obtain constant-factor approximations in the $\Psi$-TRP.
\end{enumerate}

\section{Generalisations of probabilistic bounds for the $k$-TSP}
\label{sec:k-TSP}

In the main paper, we provide probabilistic bounds for the $k$-TSP when points are sampled independently from a distribution with continuous density on a compact. In this section, we present a natural extension of this result to distributions with general densities $f$ on a compact. In particular, the density $f$ is allowed to diverge on a zero-measure set. To this end, we use the notion of Lebesgue derivative $\tilde f$, defined as the local average value of $f$ on centered balls. Intuitively, $\tilde f$ is a smoothed version of the density $f$. For instance, if $f$ is continuous then $\tilde f=f$. Formally, the Lebesgue derivative is defined as follows:
\begin{equation*}
    \tilde f(x) := \lim_{r\to 0}\frac{1}{|B(x,r)|}\int_{B(x,r)} f,\quad\forall x,
\end{equation*}
where $|B(x,r)|$ denotes the volume of a centered ball at $x$ of radius $r$. The Lebesgue differentiation theorem states that this limit exists and that $\tilde f$ and $f$ coincide almost everywhere. By construction, the maximum density of points sampled according to $f$ cannot exceed $\|\tilde f\|_\infty$. Because $f$ and $\tilde f$ coincide almost everywhere, if $\|\tilde f\|_\infty<\infty$, the same proof as for continuous densities gives this non-asymptotic lower bound for the length of the $k$-TSP, where $f$ has simply been replaced by $\tilde f$.

\begin{prop}
\label{prop:lower boud kTSP non uniform}
Assume $n$ vertices are drawn independently, on a compact space $\cK$, according to a density $f$ such that its Lebesgue derivative $\tilde f$ is bounded on $\cK$. Denote by $l_{TSP}(k,n)$ the length of the $k-$TSP on these $n$ vertices, where $2\leq k\leq n$. There exists a universal constant $c>0$ such that
\begin{equation*}
    \mathbb E[l_{TSP}(k,n)] \geq c\frac{k-1}{(\|\tilde f\|_\infty n)^{\frac{1}{2}\left(1+\frac{1}{k-1}\right)}} \mathcal A_\cK^{-\frac{1}{2(k-1)}}.
\end{equation*}
\end{prop}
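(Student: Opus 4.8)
I will reduce the claim to the continuous-density lower bound already proved in \cite{blanchard2021probabilistic}, by observing that the latter proof uses the density $f$ only through upper bounds on the mass it assigns to small sets, and that those bounds are governed by $\|\tilde f\|_\infty$ rather than by the (possibly infinite) pointwise supremum of $f$. So the first step is to recall the shape of that argument. If $l_{TSP}(k,n)\le L$ and $Y_1,\dots,Y_k$ denote the visited points in service order, then $\sum_{i=1}^{k-1}|Y_{i+1}-Y_i|\le L$; hence the event $\{l_{TSP}(k,n)\le L\}$ is contained in the event that some ordered $k$-tuple of distinct sample points has consecutive distances summing to at most $L$. Union-bounding over the at most $n^k$ such tuples, conditioning on the first point, performing the change of variables $y_i=x_{i+1}-x_i$, bounding the integrand by $\|f\|_\infty$ in the resulting $(k-1)$-fold integral (the last integration contributing $\int_\cK f=1$), and inserting the volume identity $\vol\{(y_1,\dots,y_{k-1})\in(\R^2)^{k-1}:\sum_i|y_i|\le L\}=(2\pi)^{k-1}L^{2(k-1)}/(2k-2)!$, one arrives at
\begin{equation*}
\Pr\!\left[l_{TSP}(k,n)\le L\right]\;\le\;n^{k}\,\|f\|_\infty^{\,k-1}\,\frac{(2\pi)^{k-1}\,L^{2(k-1)}}{(2k-2)!}.
\end{equation*}
Using $(2k-2)!\ge\bigl(2(k-1)/e\bigr)^{2(k-1)}$, the right-hand side is at most $\tfrac12$ once $L$ is at most a universal constant times $(k-1)\,(n\|f\|_\infty)^{-1/2}(2n)^{-1/(2(k-1))}$; calling that threshold $L_0$, the inequality $\mathbb E[l_{TSP}(k,n)]\ge L_0\,\Pr[l_{TSP}(k,n)>L_0]\ge L_0/2$ yields the stated bound, the factor $\mathcal A_\cK$ entering exactly as in the continuous case (this is harmless, since $\int_\cK\tilde f=1$ forces $\|\tilde f\|_\infty\mathcal A_\cK\ge1$).

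The second, and really the only substantive, step is the substitution of $\|\tilde f\|_\infty$ for $\|f\|_\infty$. The sole quantitative input about $f$ above is the estimate $\Pr[\sum_i|X_{i+1}-X_i|\le L]\le\|f\|_\infty^{k-1}\vol(\cdots)$, which uses nothing beyond an upper bound on the integrand. Since $\tilde f$ is the Lebesgue derivative of $f$, the Lebesgue differentiation theorem gives $f=\tilde f$ almost everywhere on $\cK$, so for every measurable $S\subseteq\cK$
\begin{equation*}
\int_S f\;=\;\int_S\tilde f\;\le\;\|\tilde f\|_\infty\,|S|.
\end{equation*}
Therefore all the conditional integrals are equally bounded with $\|\tilde f\|_\infty^{k-1}$ in front, every downstream inequality holds verbatim with $\|\tilde f\|_\infty$ in place of $\|f\|_\infty$, and the displayed bound follows with the same universal constant $c$.

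The part that needs care — and the main obstacle — is to verify that the continuous-density proof never uses continuity of $f$, nor any pointwise information beyond an upper bound on its essential supremum: no pointwise \emph{lower} bound on $f$, no evaluation of $f$ at specific points, and no use of the mass of a small set in the sharp form $\approx f(x)|S|$ rather than merely $\le\|\tilde f\|_\infty|S|$. This is exactly why the quantity appearing in the bound is $\|\tilde f\|_\infty$ and not the pointwise supremum of $f$ (which may be $+\infty$ for a perfectly well-behaved distribution): the argument only ever measures how much probability mass can be concentrated in a small region, and the Lebesgue derivative is precisely what controls this — indeed $\sup\tilde f$ equals the essential supremum of $f$, and the hypothesis $\|\tilde f\|_\infty<\infty$ is exactly what keeps the bound non-vacuous. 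Granting that verification, there is nothing more to do.
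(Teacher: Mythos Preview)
Your proposal is correct and is essentially the paper's own argument: the paper's proof of this proposition is just the one-line remark preceding the statement, namely that since the Lebesgue differentiation theorem gives $f=\tilde f$ almost everywhere, the continuous-density proof from \cite{blanchard2021probabilistic} carries over verbatim with $\|\tilde f\|_\infty$ in place of $\|f\|_\infty$. You have simply made explicit the shape of that underlying union-bound/volume argument and pinpointed where the essential-supremum bound enters, which is exactly the verification the paper leaves implicit.
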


For the upper bound, we provide similar asymptotic results, which match the lower bound whenever $k\to\infty$ and $k=o(n)$.

\begin{prop}
\label{prop:upper bound kTSP non uniform}
Assume $n$ vertices are drawn independently, on a compact space $\cK$, according to a density $f$ such that $\tilde f$ is bounded on $\cK$. Denote by $l_{TSP}(k_n,n)$ the length of the $k_n-$TSP on these $n$ vertices, where $2\leq k_n=o(n)$. There exists a universal constant $C>0$ such that
\begin{equation*}
    \limsup_{n\to\infty} \mathbb E[l_{TSP}(k_n,n)] \frac{(\|\tilde f\|_\infty n)^{\frac{1}{2}\left(1+\frac{1}{k_n-1}\right)}}{k_n-1}\psi_n\leq C.
\end{equation*}
where $\psi_n=1$ if $k_n\to\infty$ and for any sequence $\psi_n\to 0$ otherwise.
\end{prop}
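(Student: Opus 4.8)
\emph{Strategy.}
The plan is to rerun the proof of the continuous-density upper bound of \cite{blanchard2021probabilistic} --- which we take as given, in the same $\psi_n$ form --- with $f$ replaced throughout by its Lebesgue derivative $\tilde f$; this substitution is the ``smoothing'' referred to in Section~\ref{sec:k-TSP}. It is legitimate because the density enters that proof only through two averaged quantities, both governed by $\tilde f$ alone. First, since $f=\tilde f$ almost everywhere, $\int_E f=\int_E\tilde f\le\|\tilde f\|_\infty\,|E|$ for every measurable $E\subseteq\cK$; in particular the number of sampled vertices in $E$ is $\mathrm{Binomial}(n,\int_E f)$ with $\int_E f\le\|\tilde f\|_\infty|E|$. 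Second, $\|\tilde f\|_\infty$ being the essential supremum of $\tilde f$, for each $\eta\in(0,1)$ there is a Lebesgue point $x^*$ of $f$ with $\tilde f(x^*)\ge(1-\eta)\|\tilde f\|_\infty$, hence an $n$-independent radius $r^*=r^*(f,\eta)>0$ with $\int_{B(x^*,r^*)}f\ge(1-2\eta)\,\|\tilde f\|_\infty\,|B(x^*,r^*)|$.

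\emph{Why these facts suffice.}
Bounding $l_{TSP}(k_n,n)$ from above means exhibiting a short path through $k_n$ sampled vertices, which the continuous-density argument does by locating a region where vertices are likely dense and touring inside it, and the needed density-at-a-scale is exactly what the two facts give. For any $a$ small relative to $r^{*2}$, tile $B(x^*,r^*)$ into $\asymp r^{*2}/a$ cells of area $a$: the cell masses lie in $[0,\|\tilde f\|_\infty a]$ with average $\ge\tfrac12(1-2\eta)\|\tilde f\|_\infty a$, so a fixed positive fraction of the cells carry mass $\gtrsim\|\tilde f\|_\infty a$; and any $k_n$ vertices lying in one such cell can be toured in length $O(\sqrt{k_n a})$ by the strip heuristic, independently of the density. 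Run the construction of \cite{blanchard2021probabilistic} at the scale $a=a_n$ imposed by the target rate --- $a_n$ of order $(\|\tilde f\|_\infty n)^{-k_n/(k_n-1)}$ up to a $k_n$-combinatorial factor and the constant $r^*$, chosen so that the expected number of cells holding $\ge k_n$ vertices diverges --- and dispose of the event ``no cell is full'' by the same concentration/fallback step used there. One gets $\mathbb E[l_{TSP}(k_n,n)]\le C'\sqrt{k_n a_n}$, and writing out $a_n$ shows $\sqrt{k_n a_n}=\Gamma_n\cdot\dfrac{k_n-1}{(\|\tilde f\|_\infty n)^{\frac12(1+\frac1{k_n-1})}}$ for a factor $\Gamma_n=\Gamma_n(k_n,r^*,n)$ bounded in $n$.

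\emph{The two regimes.}
By Stirling the $k_n$-combinatorial part of $\Gamma_n$ is bounded and has a finite limit as $k_n\to\infty$; the sole residual $f$-dependence is through $r^{*-1/(k_n-1)}$, bounded and tending to $1$ as $k_n\to\infty$ (and similarly for the $\log$-factor coming from the divergence rate of $a_n$, handled exactly as in the continuous case). Hence if $k_n\to\infty$ the normalized expectation $\mathbb E[l_{TSP}(k_n,n)](\|\tilde f\|_\infty n)^{\frac12(1+\frac1{k_n-1})}/(k_n-1)$ has $\limsup$ at most a universal constant $C$, which is the case $\psi_n\equiv1$; and if $k_n$ does not tend to infinity the same quantity is merely bounded in $n$ (the bound depending on $f$ only through $r^*$), so multiplied by any $\psi_n\to0$ it tends to $0\le C$. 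This reproduces the asserted dichotomy, inherited from the continuous-density bound.

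\emph{Main obstacle.}
The real work is twofold. First, to go line by line through the proof of \cite{blanchard2021probabilistic} and confirm that continuity of $f$ is used only through the two averaged facts, so that $f\rightsquigarrow\tilde f$ is valid with a threshold on $n$ depending on $f$ solely through $\|\tilde f\|_\infty$ and $r^*(f,\eta)$, never through a modulus of continuity of $f$ itself. Second --- and this is the genuinely delicate step --- the probabilistic claim that at the fine scale $a_n$ \emph{some} cell catches $k_n$ vertices with failure probability small enough to absorb the fallback bound without degrading the rate: a crude union bound over cells loses a spurious factor of order $\sqrt{k_n}$, so one must exploit the near-independence of disjoint cells, and balance $k_n$ against the divergence rate of $a_n$ and against $n$ delicately enough to keep the $k_n\to\infty$ regime sharp. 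The remaining steps duplicate the continuous case and are routine.
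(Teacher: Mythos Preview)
Your approach is essentially the paper's own: locate an $n$-independent region whose $f$-average is within a factor $1-\epsilon$ of $\|\tilde f\|_\infty$ (you take a ball around a Lebesgue point of $f$, the paper a cube via the bounded-eccentricity form of the Lebesgue differentiation theorem), condition via Hoeffding on it receiving about the expected number of vertices, then rerun the continuous/uniform-density cell argument inside it; your residual factor $r^{*-1/(k_n-1)}$ is exactly the paper's $|U^\epsilon|^{-1/(2(k_n-1))}$ and drives the $\psi_n$ dichotomy identically. The only substantive place the paper is more explicit is that it sidesteps the second-moment delicacy you anticipate by splitting off the regime $k_n\ge n^{1/3}$ and handling it with the crude bound $l\le \sqrt{|U^\epsilon|}\,(k_n-1)(2\sqrt{n_{U^\epsilon}}+C)/n_{U^\epsilon}$ directly, reserving the cell argument (with Jensen in place of your pigeonhole) for $k_n\le n^{1/3}$.
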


\begin{proof}
We first recall that in the Lebesgue differentiation theorem, we can extend the family of balls centered at each point by families of sets with bounded eccentricity $\mathcal V$ in other words, there exists $c>0$ such that every set $U\in \mathcal V$ is contained in a ball $B$ with $|U|\geq c|B|,$ and such that every point $x$ is contained in arbitrarily small sets of the family $\mathcal V$. For instance, in this proof, we can define $\mathcal V$ as the family of cubes. The Lebesgue differential theorem gives
\begin{equation*}
    \tilde f(x) = \lim_{U\to x,U\in \mathcal V} \frac{1}{|U|}\int_U f,
\end{equation*}
where $U\to x$ means that the sets shrink to $x$ i.e. $x\in U$ and their diameters tend to $0$.

Now let $\epsilon>0$ be an error tolerance. Consider a cube $U^\epsilon$ such that
\begin{equation*}
   \left| \frac{1}{|U^\epsilon|}\int_{U^\epsilon} f - \|\tilde f\|_\infty \right|\leq \epsilon\|\tilde f\|_\infty.
\end{equation*}
For convenience, let us write $f(U^\epsilon):=\frac{1}{|U^\epsilon|}\int_{U^\epsilon} f$, and let $N(U^\epsilon)$ denote the number of vertices contained in $U^\epsilon$. According to the Hoeffding inequality, with probability $1-e^{-2\epsilon^2 f(U^\epsilon)^2 n}$, $U^\epsilon$ contains at least $n_{U^\epsilon}:=|U^\epsilon|f(U^\epsilon)(1-\epsilon) n\geq |U^\epsilon|\|\tilde f\|_\infty (1-\epsilon)^2 n$ vertices. We call $E_0$ this event. Note that $k=o(n_{U^\epsilon})$. First suppose $k\leq n^{1/3}$. Conditionally on $E_0$, these $n_{U^\epsilon}$ vertices are drawn independently according to a density $\frac{f}{|U^\epsilon|f(U^\epsilon)}$ on $U^\epsilon$. We will now focus on the $k-$TSP in $U^\epsilon$, which will serve as upper bound for the $k-$TSP on $\cK$. From here, the proof is very similar to that of the continuous density case. Let us fix $\alpha>0$. We start by partitioning $U^\epsilon$ into $P_\alpha:= m_\alpha^2$ sub-squares of equal size $\frac{\sqrt{|U^\epsilon|}}{m_\alpha}\times \frac{\sqrt{|U^\epsilon|}}{m_\alpha}$ where $m_\alpha:= \left\lfloor \frac{1}{\alpha} \sqrt{\frac{{n_{U^\epsilon}}^{1+\frac{1}{k-1}}}{k-1}}\right\rfloor$. We will show that with high probability, there exists at least one of these sub-squares that contains at least $k$ vertices. Define $X_i^\alpha$ as the number of vertices in sub-square $i$. Conditionally on $E_0$, $(X_1^\alpha,\cdots,X_{P_\alpha}^\alpha)$ follows a multinomial where the probability corresponding to sub-square $Q_i^\alpha$ is $p_i=\frac{1}{|U^\epsilon|f(U^\epsilon)}\int_{Q_i^\alpha}f \leq \frac{|Q_i^\alpha|\|\tilde f\|_\infty}{|U^\epsilon|f(U^\epsilon)}\leq \frac{1}{P_\alpha(1-\epsilon)}$. Now denote by $A_i^\alpha:=\{X_i^\alpha\geq k\}$ the event that sub-square $i$ contains at least $k$ vertices. We first give a lower bound on $\mathbb P(A_i^\alpha)$:
\begin{equation*}
    \mathbb P(A_i^\alpha)
    \geq \binom{n_{U^\epsilon}}{k}p_i^k\left(1-\frac{1}{P_\alpha(1-\epsilon)}\right)^{n_{U^\epsilon}-k}\geq \frac{n_{U^\epsilon}^k}{k!} \cdot(1+o(1))\cdot p_i^k.
\end{equation*}
By Jensen's inequality, $\frac{1}{P^\alpha}\sum_{i=1}^{P_\alpha} p_i^k\leq \frac{1}{P_\alpha^k}.$ Therefore, 
\begin{equation*}
    \sum_{i=1}^{P_\alpha} \mathbb P(A_i^\alpha) \geq c\cdot \alpha^{2k-2} \cdot(1+o(1))\geq \tilde c\cdot \alpha^{2k-2}
\end{equation*}
for some constant $\tilde c>0$, so we can use the same proof as in the case of uniform probabilities in the original paper. Then, if $l_{U^\epsilon}(k,n_{U^\epsilon})$ denotes the length of the $k-$TSP on the $n_{U^\epsilon}$ vertices in $U^\epsilon$, we obtain,
\begin{equation*}
    \mathbb E[l_{U^\epsilon}(k,n_{U^\epsilon})|E_0] \leq \hat C\frac{k-1}{n_{U^\epsilon}^{\frac{1}{2}\left(1+\frac{1}{k-1}\right)}} \sqrt{|U^\epsilon|} \leq |U^\epsilon|^{\frac{-1}{2(k_n-1)}} \frac{C_1(k_n-1)}{(\|\tilde f\|_\infty n)^{\frac{1}{2}\left(1+\frac{1}{k_n-1}\right)}},
\end{equation*}
for some constant $C_1$. If $E_0$ is not realized, we can use the naive bound $l_{TSP}(k_n,n)\leq l_{TSP}(n,n)\leq 2\sqrt n+C$. Therefore,
\begin{align*}
    \mathbb E[l_{TSP}(k_n,n)]&\leq \mathbb  E[l_{U^\epsilon}(k,n_{U^\epsilon})|E_0] + (2\sqrt n+C)(1-\mathbb P(E_0)) \\
    &\leq  |U^\epsilon|^{\frac{-1}{2(k_n-1)}} \frac{C_1(k_n-1)}{(\|\tilde f\|_\infty n)^{\frac{1}{2}\left(1+\frac{1}{k_n-1}\right)}}\cdot (1+o(1)).
\end{align*}
Note that $|U^\epsilon|^{1/(2(k_n-1))}\to 1$ if $k_n\to\infty.$ Otherwise, we can use $|U^\epsilon|^{1/(2(k_n-1))}\leq |U^\epsilon|^{-1/2}=o(\psi_n)$ for any sequence $\psi_n\to\infty$ which ends the proof for $k_n\leq n^{1/3}$. In the case where $k\geq n^{1/3}$, the same proof as in the uniform density case shows that
\begin{equation*}
    \mathbb E[l_{U^\epsilon}(k,n_{U^\epsilon})|E_0]\leq \sqrt{|U^\epsilon|}\frac{k-1}{n_{U^\epsilon}}(2\sqrt{n_{U^\epsilon}}+C)\leq 2 \frac{k-1}{(\|\tilde f\|_\infty n)^{\frac{1}{2}\left(1+\frac{1}{k_n-1}\right)}}(1+o(1)).
\end{equation*}
The proof follows from the same arguments as in the case $k_n\leq n^{1/3}$.
\end{proof}

For the case $k=\Theta(n)$, we expect a constant-factor approximation for the $k-$TSP to perform the TSP on a set with maximal average density and area $\Theta(k/n)$. In the following, we state this generalization as a claim without proof. A possible proof sketch would use similar techniques to the analysis developed for the TRP in the original paper.

\begin{claim}Assume $n$ vertices are drawn independently, on a compact space $\cK$, according to a density $f$. Let $\epsilon>0$. Denote by $l_{TSP}(k_n,n)$ the length of the $k_n-$TSP on these $n$ vertices, where $\epsilon n\leq k_n \leq n$. There exists constants $0<c_\epsilon<C$ such that
\begin{equation*}
    c_\epsilon \leq \liminf_{n\to\infty}  \frac{\mathbb  E[l_{TSP}(k_n,n)]}{\sqrt n g_f(k_n/n)}\leq \limsup_{n\to\infty} \frac{\mathbb  E[l_{TSP}(k_n,n)]}{\sqrt n g_f(k_n/n)}\leq C,
\end{equation*}
where if we denote by $F$ the cumulative distribution of $f$ and $y_0=\inf\{y: 1-F(y)\leq k_n/n\}$,
\begin{equation*}
    g_f(k_n/n) = \int \sqrt{f}\mb 1_{f>y_0} + \frac{k_n/n-(1-F(y_0))}{\sqrt{y_0}}.
\end{equation*}
\end{claim}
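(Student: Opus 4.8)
We would establish the two inequalities separately, in each case relating the $k_n$-TSP path to a plain TSP on the set $S:=\{f>y_0\}\cup A^\star$ of probability mass $k_n/n$, where $A^\star\subseteq\{f=y_0\}$ is any measurable set with $\int_{A^\star}f=\tfrac{k_n}{n}-(1-F(y_0))$ (possible since $F(y_0^-)\le 1-\tfrac{k_n}{n}\le F(y_0)$ by right-continuity of $F$ and the definition of $y_0$), so that $\int_S f=\tfrac{k_n}{n}$ and
\begin{equation*}
\int_S\sqrt f\ =\ \int\sqrt f\,\mb 1_{f>y_0}\ +\ \sqrt{y_0}\,|A^\star|\ =\ g_f(k_n/n).
\end{equation*}
By the bathtub principle $S$ minimises $\int_{S'}\sqrt f$ over all $S'$ with $\int_{S'}f=k_n/n$, which is why $g_f(k_n/n)$ governs both directions; note also that $g_f(q)$ is continuous and positive on $q\in[\epsilon,1]$, so $\sqrt n\,g_f(k_n/n)\to\infty$ and additive $O(1)$ errors are harmless. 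Both estimates will invoke the BHH-type machinery developed for the TRP in \cite{blanchard2021probabilistic}.

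\textbf{Upper bound.} The plan is to use the feasible path obtained by running a TSP tour on the set $V_S$ of sample points lying in $S$ and deleting one edge: if $|V_S|>k_n$ we keep any $k_n$ of its vertices (shortcutting does not increase length), and if $|V_S|<k_n$ we append the $k_n-|V_S|$ sample points outside $S$ nearest to the path. Since $|V_S|\sim\mathrm{Binomial}(n,k_n/n)$, a Chernoff bound gives $|V_S|=k_n(1+o(1))$ and $\bigl||V_S|-k_n\bigr|=O(\sqrt{n\log n})$ outside an event of probability $o(n^{-1})$; a path through $O(\sqrt{n\log n})$ points of a bounded planar region has length $O((n\log n)^{1/4})=o(\sqrt n)$, so the correction step and the rare event (handled crudely by $l_{TSP}(k_n,n)\le l_{TSP}(n,n)=O(\sqrt n)$) are negligible. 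Conditional on $|V_S|=m$, the points of $V_S$ are i.i.d.\ with density $f\mb 1_S/(k_n/n)$, so the standard BHH upper bound $\mathbb E[\mathrm{TSP}]\le C\sqrt m\int\sqrt h$ --- valid for any density $h$ on a bounded set with $\int\sqrt h<\infty$ --- gives
\begin{equation*}
\mathbb E[\mathrm{TSP}(V_S)\mid |V_S|=m]\ \le\ C\sqrt m\int_S\sqrt{\tfrac{f}{k_n/n}}\ =\ C\sqrt{\tfrac{mn}{k_n}}\;g_f(k_n/n),
\end{equation*}
and averaging over $m=k_n(1+o(1))$ yields $\mathbb E[l_{TSP}(k_n,n)]\le C\sqrt n\,g_f(k_n/n)(1+o(1))$ with $C$ universal.

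\textbf{Lower bound.} Here the plan is to pass to a sum of nearest-neighbour distances. For any path visiting a vertex set $V$, summing the lengths of the (one or two) edges incident to each vertex gives $\mathrm{len}\ge\sum_{v\in V}d_V(v)-\operatorname{diam}(\cK)$, where $d_V(v)$ is the distance from $v$ to its nearest neighbour in $V$; since $V$ is a subset of the $n$ sample points, $d_V(v)\ge d_v$ (nearest-neighbour distance in the full sample), and $\sum_{v\in V}d_v$ over $|V|=k_n$ is minimised by the $k_n$ smallest such distances. Hence
\begin{equation*}
l_{TSP}(k_n,n)\ \ge\ \sum_{i=1}^{k_n}d_{(i)}\ -\ \operatorname{diam}(\cK),
\end{equation*}
with $d_{(1)}\le\cdots\le d_{(n)}$ the ordered nearest-neighbour distances. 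It then remains to show $\mathbb E\sum_{i=1}^{k_n}d_{(i)}\ge c_\epsilon\sqrt n\,g_f(k_n/n)$. Heuristically $d_i\asymp(nf(Y_i))^{-1/2}$, and the $k_n$ smallest nearest-neighbour distances are realised by the $k_n$ densest sample points --- essentially those in $\{f>y_0\}$ together with $\tfrac{k_n}{n}-(1-F(y_0))$ points at level $y_0$ --- so that $\mathbb E\sum_{i=1}^{k_n}d_{(i)}$ is of order $\sqrt n\bigl(\int\sqrt f\,\mb 1_{f>y_0}+\sqrt{y_0}\,|A^\star|\bigr)=\sqrt n\,g_f(k_n/n)$. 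Turning this into a proof requires the density-stratified argument used for the TRP in \cite{blanchard2021probabilistic}: partition $\cK$ into cells on which $f$ varies by at most a constant factor, control the point counts of the cells and of the super-level sets $\{f\ge y\}$ uniformly by concentration so as to pin down which cells the $k_n$ densest points occupy (this, together with the uniform positivity of $g_f(q)$ for $q\ge\epsilon$, is where the dependence $c_\epsilon$ enters), and in each cell bound $\sum d_i$ below by a constant times (number of points)$\times$(local nearest-neighbour scale), discarding via a second-moment bound on close pairs the $o(1)$-fraction of points with an atypically close neighbour. The main obstacle is precisely this last estimate: controlling $\sum_{i=1}^{k_n}d_{(i)}$ uniformly in $n$ while allowing $f$ to be unbounded on its high-density set and near the threshold $y_0$. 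This is the density-stratified concentration analysis already carried out for the TRP, which is why only the statement is recorded here.
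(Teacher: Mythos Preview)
The paper does not actually prove this claim: it is explicitly ``stated as a claim without proof,'' with only the one-line hint that the argument should mimic the TRP analysis (partition into sub-squares, visit them by decreasing density). So there is no paper proof to compare against; what follows is an assessment of your sketch on its own merits and against that hint.

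Your upper bound is essentially the paper's intended construction, phrased more abstractly: instead of ordering sub-squares by density you go straight to the super-level set $S$ and invoke BHH on the conditional density $f\mb 1_S/(k_n/n)$. This is fine at the level of a sketch, but note that the ``standard BHH upper bound $\mathbb E[\mathrm{TSP}]\le C\sqrt m\int\sqrt h$'' you cite is really a $\limsup$ statement, and here the density $h=h_{k_n/n}$ varies with $n$; making the bound uniform over $q=k_n/n\in[\epsilon,1]$ requires either a compactness argument on $g_f$ or, more simply, the sub-square partition the paper has in mind (fix a piecewise-constant approximation $\phi$ of $f$, visit sub-squares in decreasing order of $\phi$ until $k_n$ points are collected, and bound each local TSP by Steele's inequality). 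The latter avoids the measurable-but-irregular set $A^\star$ entirely.

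Your lower bound takes a genuinely different route from the paper's hint. The reduction $l_{TSP}(k_n,n)\ge\sum_{i=1}^{k_n}d_{(i)}-\operatorname{diam}(\cK)$ is correct and elegant, but the residual problem --- showing $\mathbb E\sum_{i=1}^{k_n}d_{(i)}\ge c_\epsilon\sqrt n\,g_f(k_n/n)$ --- is \emph{not} the analysis carried out for the TRP. The TRP lower bound in \cite{blanchard2021probabilistic} decomposes the optimal tour into sub-paths contained in cells and lower-bounds each sub-path length via the local $k$-TSP estimate; it never touches order statistics of nearest-neighbour distances. What you need instead is (i) a uniform lower bound $d_i\ge c/\sqrt{n f(X_i)}$ outside an exceptional set of $o(n)$ indices, and (ii) concentration of $\min_{|V|=k_n}\sum_{v\in V}f(X_v)^{-1/2}$ around $n\int_S\sqrt f$. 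Both steps are plausible and not hard, but they are new ingredients, so the sentence ``this is the density-stratified concentration analysis already carried out for the TRP'' overstates the reduction. If you want to stay closest to the paper, the cleaner route is to adapt the TRP sub-path argument directly: restrict the optimal $k_n$-TSP path to each cell, lower-bound each cell's contribution by its local $k$-TSP, and observe that visiting $k_n$ points forces the path to enter cells whose total $\int\sqrt f$ is at least $g_f(k_n/n)$.
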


\section{The $\Psi$-TRP}
\label{sec:psi-TRP}

In the main paper, we analyzed the TRP under fairness considerations. In particular, we showed that achieving efficency while ensuring max-min fairness asymptotically is possible. Here, we propose another notion of fairness and give similar positive results. Recall that the TRP objective of a given tour is
\begin{equation*}
    \sum_{i=1}^n l_i,
\end{equation*}
where $l_i$ is the latency at vertex $i$. In resource allocation problems, this objective corresponds to the utilitarian principle i.e. maximizing the total utility. A common approach to fairness consists of maximizing $\sum_j f(u_j)$ where $f$ is a concave function and $u_j$ denotes the utility of player $j$. In particular, the $\log$ function yields the proportional fairness solution under mild convexity assumptions \citep{bertsimas2011price}. We adapt this idea to our setting by changing the latency objective. Specifically, for any increasing function $\Psi$, we can define the $\Psi-$TRP, which seeks a tour that minimizes the objective:
\begin{equation*}
    \sum_{i=1}^n \Psi(l_i).
\end{equation*}
To capture fairness considerations, we assume that $\Psi$ is convex. We show that, for a large class of functions $\Psi$, our approximation algorithm for the TRP is also constant-factor optimal for the $\Psi$-TRP, hence encapsulating this notion of fairness. Indeed, our analysis for the TRP generalizes to the $\Psi-$TRP when $\Psi$ is a convex monomial. This is formalized in the following proposition, which we prove in the next sections.

\begin{prop}
\label{prop:psi-TRP objective}
Assume all $n$ vertices are drawn according to a distribution with density $f$ on a compact space $\cK\subset \mathbb R^2$. Let $\alpha\geq 1$ and $\Psi_\alpha:x\mapsto x^\alpha$ the power function. Denote by $l_{\alpha-TRP}$ the optimal $\Psi_\alpha-$TRP objective of a tour for the $\Psi_\alpha-$TRP. Then,
\begin{equation*}
    c_\alpha\int_\cK g_\alpha(f,x) dx \leq \liminf_{n\to\infty} \frac{\mathbb E\left[ l_{\Psi-TRP}\right]}{n^{1+\alpha/2}} \leq \limsup_{n\to\infty} \frac{\mathbb E\left[ l_{\Psi-TRP}\right]}{n^{1+\alpha/2}} \leq C_\alpha\int_\cK g_\alpha(f,x) dx
\end{equation*}
where $0<c_\alpha<C_\alpha$ are two constants depending only in $\alpha$ and
\begin{equation*}
    g_\alpha(f,x) = \begin{cases} 
          f(x)\left(\int_\cK\sqrt f \cdot \mb 1_{f< f(x)}\right)^\alpha, & \text{if }\int_\cK \mb 1_{f=f(x)}=0 \\
          \sqrt{f(x)}\frac{\left(\int_\cK\sqrt{f}\cdot \mb 1_{f\leq f(x)}\right)^{\alpha+1}-\left(\int_\cK\sqrt{f}\cdot \mb 1_{f<f(x)}\right)^{\alpha+1}}{(\alpha+1)\int_\cK \mb 1_{f=f(x)}} , & \text{otherwise}.
       \end{cases}
\end{equation*}
\end{prop}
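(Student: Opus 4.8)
The plan is to prove the two inequalities separately, in each case reducing the $\Psi_\alpha$-TRP to the plain TRP analysis of \cite{blanchard2021probabilistic} through the following heuristic, which both bounds must make rigorous: a near-optimal solution visits $\cK$ by decreasing density, so a vertex $X_i$ sitting where the density level is $f(X_i)$ is served only after the whole denser part of $\cK$ has been traversed, i.e.\ after a path whose length is of order $\sqrt n\int_\cK\sqrt f\,\mathbf 1_{\{f>f(X_i)\}}$ (by the Beardwood--Halton--Hammersley rate applied region by region, plus a within-plateau correction of order $\sqrt n\int_\cK\sqrt f\,\mathbf 1_{\{f=f(X_i)\}}$ when the level set has positive measure). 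Raising this latency to the power $\alpha$ and summing over the $n$ vertices produces exactly $n^{1+\alpha/2}\int_\cK g_\alpha(f,x)\,dx$; the two branches of $g_\alpha$ are precisely the cases ``$f(X_i)$ is a continuity point of the law of $f(X)$'' versus ``$\{f=f(X_i)\}$ is a plateau'', on the latter of which the latency ranges linearly, so that a sum of $\alpha$-th powers integrates to the difference of $(\alpha{+}1)$-th powers of the partial $\sqrt f$-masses written there.

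\textbf{Upper bound.} I would run the TRP approximation scheme of \cite{blanchard2021probabilistic}: cover $\cK$ by cells whose side is adapted to the local density, order the cells by density, solve a TSP inside each cell, and concatenate the cells in that order. For $X_i$ in a cell $Q$, its latency is at most the sum of the within-cell tour lengths over all cells at least as dense as $Q$; by the BHH theorem and a Hoeffding/Chernoff concentration of the cell counts around $n\int_Q f$ (exactly as in the proof of \Cref{prop:upper bound kTSP non uniform}) this is at most $(1+o(1))\,C\sqrt n\int_\cK\sqrt f\,\mathbf 1_{\{f\ge f(X_i)\}}$, with the position of $Q$ inside its plateau tracked separately so as not to lose a constant there. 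Taking $\alpha$-th powers, summing over $i$, taking expectations (Fubini), and sending the mesh to $0$ turns the cell sum into the Riemann integral $C_\alpha n^{1+\alpha/2}\int_\cK g_\alpha(f,x)\,dx$; on the exponentially small event where concentration fails I would use the deterministic bound $l_i\le l_n\le c\sqrt n$, so that $\sum_i l_i^\alpha\le n(c\sqrt n)^\alpha=O(n^{1+\alpha/2})$ and the contribution is negligible.

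\textbf{Lower bound.} For an arbitrary tour the latency of the $j$-th served vertex is at least half the length of the shortest path through the first $j$ served vertices, hence at least $\tfrac12\min_{|S|=j}\mathrm{TSP}(S)$. The key estimate is that, with probability $1-e^{-\Omega(n)}$ and uniformly over the relevant range of $j$, $\min_{|S|=j}\mathrm{TSP}(S)\ge c\sqrt n\int_\cK\sqrt f\,\mathbf 1_{\{f\ge y_j\}}$, where $y_j$ is defined by $\mathbb P(f(X)\ge y_j)=j/n$: a density-adaptive cell (or nearest-neighbour) lower bound for the TSP shows that $j$ vertices cannot be packed more cheaply than into the densest mass-$j/n$ region $\{f\ge y_j\}$, on which BHH supplies the bound. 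Raising to the power $\alpha$ (here $\alpha\ge1$ and convexity of $\Psi_\alpha$ enter, through Jensen's inequality in expectation or simply by working on the high-probability event), summing over $j=1,\dots,n$, and changing variables $j=\lfloor n\,\mathbb P(f(X)\ge y)\rfloor$ turns the rank-sum $n^{\alpha/2}\sum_j\big(\int_\cK\sqrt f\,\mathbf 1_{\{f\ge y_j\}}\big)^\alpha$ into $c_\alpha n^{1+\alpha/2}\int_\cK g_\alpha(f,x)\,dx$, the plateau branch arising exactly where the law of $f(X)$ has an atom.

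\textbf{Main obstacle.} The crux is the lower bound, and within it the subset-TSP estimate $\min_{|S|=j}\mathrm{TSP}(S)\gtrsim\sqrt n\int_\cK\sqrt f\,\mathbf 1_{\{f\ge y_j\}}$ valid \emph{simultaneously} for all $j$ and for densities $f$ that need not be bounded: this requires a BHH-type lower bound robust to the adversarial choice of the $j$-subset, together with a uniform concentration of the regional vertex counts. The reconciliation of the discrete rank-sum with the integral across density plateaus---which is exactly what forces the two-case definition of $g_\alpha$---is the remaining bookkeeping; since only the existence of constants $0<c_\alpha<C_\alpha$ is asked, the BHH constants and all concentration slacks can be absorbed crudely, following \cite{blanchard2021probabilistic}.
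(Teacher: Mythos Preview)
Your upper bound is the paper's argument: partition into cells, order by density, run a TSP inside each cell and concatenate, then control cell counts by Chernoff and cell-TSP lengths by BHH, and refine the mesh. The plateau bookkeeping you anticipate is packaged by the paper into a separate approximation lemma (Lemma~\ref{lemma:piece-wise constant approx psi-trp}) showing $\int_\cK g_\alpha(\phi)\to\int_\cK g_\alpha(f)$ for piecewise-constant $\phi\to f$.

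Your lower bound takes a genuinely different route. You reduce to the rank-wise inequality $l_j\ge\min_{|S|=j}\mathrm{TSP}(S)$ and then want a uniform-in-$j$ subset-TSP lower bound $\min_{|S|=j}\mathrm{TSP}(S)\gtrsim\sqrt n\int_\cK\sqrt f\,\mathbf 1_{\{f\ge y_j\}}$. This is conceptually valid, but the lemma you isolate is precisely the lower-bound half of the $k$-TSP rate for $k=\Theta(n)$ and general density, which this very paper states only as a \emph{claim without proof} and remarks would need ``similar techniques to the analysis developed for the TRP''. So your reduction defers the hard step to a statement of comparable difficulty; it is not circular, but it is not a shortcut either, and you would need a high-probability version uniform in $j$, stronger than the in-expectation $k$-TSP results actually proved.

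The paper's lower bound avoids this by working directly on an optimal $\Psi_\alpha$-tour for a piecewise-constant density: it chops the tour into sub-paths of \emph{equal vertex count} $n_*\asymp\epsilon\sqrt{f_*n/m^2}$ inside each density cell (short ``low-density'' fragments are merged across same-density cells so every piece has exactly $n_*$ vertices), lower-bounds each piece's length via a \emph{local} constant-density $k$-TSP tail bound already available from the main paper, and then shows by a swap argument that the density-decreasing order minimises $\sum_i\Psi_\alpha\bigl(\sum_{j<i}l(\tilde{\mathcal P}_j)\bigr)$. The equal-count constraint is the new technical wrinkle forced by the nonlinearity of $\Psi_\alpha$ (for $\alpha=1$ arbitrary sub-path sizes suffice); once it is in place the discrete sum matches $\int_\cK g_\alpha(f)$ cell by cell, and the approximation lemma transfers this to general $f$.

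What each route buys: yours would derive the $\Psi_\alpha$-TRP lower bound as a corollary of a $k$-TSP lower bound, unifying the two problems, but you must then prove that $k$-TSP bound from scratch, and the plateau correction will resurface inside that proof rather than only in the final change of variables. The paper's route needs only the easy uniform-density $k$-TSP input, paid for by the equal-count sub-path surgery.
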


We use similar proof ideas as for the probabilistic bounds of the classical TRP. However, because $\Psi_\alpha$ is non-linear for $\alpha>1$, the arguments are more technical. In particular for the lower bound, we divide the tour into sub-paths in each sub-square of the partition but with the additional constraint that all sub-paths should visit the same number of vertices. The non-linearity of $\Psi_\alpha$ also affects the form of the integrand $g_\alpha(f,\cdot)$ for degenerate levels of the density function when $\int_\cK\mb 1_{f=f(x)}>0$. As a result, the proof of convergence of the integral of $g_\alpha(\phi,\cdot)$ to the integral of $g_\alpha(f,\cdot)$, for fine piece-wise constant approximations $\phi$ of $f$, is more technical than the equivalent result for the TRP.

Furthermore, the upper bound is reached by the same approximating scheme as for the TRP in which we partition the space in sub-squares and visit sub-squares by decreasing order of density. In particular, this scheme is also constant-factor optimal for the $\Psi-$TRP. Using the same arguments, we can generalize Proposition~\ref{prop:psi-TRP objective} to any linear combination of monomials where the leading term is a of the form $x\mapsto c\cdot x^\alpha$ where $c>0$ and $\alpha\geq 1$. In other words, the competitive ratio between the fairness-maximizing $\Psi-$TRP and the efficiency-maximizing TRP is asymptotically $1$.

\subsection{A lower bound}

\begin{prop}
\label{prop:lower bound Psi TRP}
Assume all $n$ vertices are drawn according to a distribution with density $f$ on a compact space $\cK\subset \mathbb R^2$. Let $\alpha\geq 1$ and $\Psi_\alpha:x\mapsto x^\alpha$ the power function. Denote by $l_{\alpha-TRP}$ the optimal TRP objective of a tour for the $\Psi_\alpha-$TRP. Then,
\begin{equation*}
    \liminf_{n\to\infty} \frac{\mathbb E\left[ l_{\Psi-TRP}\right]}{n^{1+\alpha/2}} \geq c_\alpha\int_\cK g_\alpha(f,x)dx,
\end{equation*}
where $c_\alpha:=\frac{1}{(\pi e)^{\alpha/2}} $ is a constant and
\begin{equation*}
    g_\alpha(f,x) = \begin{cases} 
          f(x)\left(\int_\cK\sqrt f \cdot \mb 1_{f< f(x)}\right)^\alpha, & \text{if }\int_\cK \mb 1_{f=f(x)}=0 \\
          \sqrt{f(x)}\frac{\left(\int_\cK\sqrt{f}\cdot \mb 1_{f\leq f(x)}\right)^{\alpha+1}-\left(\int_\cK\sqrt{f}\cdot \mb 1_{f<f(x)}\right)^{\alpha+1}}{(\alpha+1)\int_\cK \mb 1_{f=f(x)}} , & \text{otherwise,}
       \end{cases}
\end{equation*}
is a function that depends only on $\alpha$ and $f$.
\end{prop}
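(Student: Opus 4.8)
The plan is to transpose the lower-bound argument for the plain TRP from the main paper, replacing the latencies $l_i$ by $l_i^{\alpha}$ and carrying the extra power through the two places where the linearity of the TRP was used: the growth of the optimal prefix as a function of time, and the combinatorial identity that turns this growth into $g_\alpha$.

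First I would reduce to the case where $f$ is piecewise constant on a fine grid. Because deleting a vertex from any feasible order can only shorten the remaining vertices' latencies (triangle inequality) and hence decrease $\sum_i\Psi_\alpha(l_i)$, coupling the $n$ samples from $f$ with a thinned sub-sample distributed, after normalisation, according to a piecewise-constant minorant $\phi\le f$ yields $\mathbb E[l_{\Psi-TRP}]\ge\mathbb E[\text{optimum for }\phi\text{ with }N\text{ vertices}]$, with $N\sim\mathrm{Bin}(n,\int\phi)$ and $\int\phi\to1$ as the grid refines. Thus it suffices to (i) prove the bound for piecewise-constant densities and (ii) show $\int_\cK g_\alpha(\phi,\cdot)\to\int_\cK g_\alpha(f,\cdot)$ along a refining sequence. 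Step (ii) is the first technical point: every piecewise-constant density has only degenerate level sets, so one always evaluates the second branch of $g_\alpha$, and one must see it converge, level by level, to the first branch; the elementary fact behind this is $\tfrac{B^{\alpha+1}-A^{\alpha+1}}{(\alpha+1)(B-A)}\to A^{\alpha}$ as $B\downarrow A$, together with control of the grid cells straddling a genuine jump of $f$. The main text already signals this as more delicate than its TRP analogue.

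Now fix a piecewise-constant $\phi$ and tile $\cK$ by sub-squares of side $\delta$ with $\delta\to0$, $n\delta^{2}\to\infty$. For the optimal order, the $i$-th vertex visited has latency equal to the length of the prefix path through the first $i$ vertices, hence at least the minimum length of \emph{any} path through $i$ of the $n$ points. The core estimate is the uniform bound: with probability $1-o(1)$, for all $i$ at once,
\begin{equation*}
  l_{(i)}\ \ge\ (1-o(1))\,\sqrt{\tfrac{n}{\pi e}}\;\gamma_\phi(i/n),\qquad
  \gamma_\phi(\beta):=\int_\cK\sqrt{\phi}\,\mb 1_{\phi>y_\beta}\,+\,\frac{\beta-\mathbb P(\phi>y_\beta)}{\sqrt{y_\beta}},
\end{equation*}
where $y_\beta$ is the density level at which the densest sub-squares have accumulated a $\beta$-fraction of the mass. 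To prove it one allocates the length of a prefix path among the sub-squares it meets: inside a sub-square of density $v$, a sub-path of length $\ell$ can touch at most $(1+o(1))\,\ell\sqrt{nv/(\pi e)}$ of its $\approx nv\delta^{2}$ uniform points (a Beardwood--Halton--Hammersley-type lower bound, which is why the constant is $1/\sqrt{\pi e}$), and this capacity is largest, per unit length, in the densest squares; so a path touching $i$ points must concentrate its length in the densest squares, which forces the displayed bound. This is exactly the step where one divides the tour into sub-paths \emph{carrying the same number of vertices} inside each sub-square: it decouples the per-square BHH estimates from the order in which the global tour interleaves squares and makes the union bound over all prefixes affordable. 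Raising to the power $\alpha$ and summing over $i$ gives, after a Riemann-sum limit, $\sum_i\Psi_\alpha(l_{(i)})\ge(1-o(1))\,(\tfrac{n}{\pi e})^{\alpha/2}\sum_{i=1}^{n}\gamma_\phi(i/n)^{\alpha}=(1-o(1))\,\tfrac{1}{(\pi e)^{\alpha/2}}\,n^{1+\alpha/2}\int_0^1\gamma_\phi(\beta)^{\alpha}\,d\beta$.

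It then remains to identify $\int_0^1\gamma_\phi(\beta)^{\alpha}\,d\beta$ with $\int_\cK g_\alpha(\phi,x)\,dx$, the purely analytic second step: writing $\nu$ for the push-forward of Lebesgue measure on $\cK$ by $\phi$ and using $-\,d\,\mathbb P(\phi>y)=y\,\nu(dy)$, the change of variables $\beta\mapsto y_\beta$ turns $\int_0^1\gamma_\phi^{\alpha}$ into $\int y\big(\int\sqrt{s}\,\mb 1_{s>y}\,\nu(ds)\big)^{\alpha}\nu(dy)$, which matches $\int g_\alpha(\phi,\cdot)$ on the atoms of $\nu$ (degenerate levels, where the cumulative $\sqrt\phi$-mass sweeps an interval, producing the second branch) and on its continuous part (non-degenerate levels, the first branch). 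Combining with the reduction of the second paragraph closes the proof. I expect the genuine obstacle to be the \emph{uniform} Beardwood--Halton--Hammersley-type lower bound — that no prefix, and no choice of which $i$ points it visits, can beat the densest-squares configuration — since this must hold simultaneously over all prefixes of all tours; the equal-vertex-count decomposition is the device that makes it provable, and the convergence in step (ii) is the only other nontrivial ingredient.
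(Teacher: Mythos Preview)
Your proposal uses the same core ingredients as the paper---equal-vertex-count sub-path decomposition inside each sub-square, a per-sub-path length lower bound producing the $(\pi e)^{-1/2}$ constant, and the ``dense-first'' optimality principle---so the two arguments are close cousins. The packaging differs: you aim for a uniform prefix bound $l_{(i)}\ge(1-o(1))\sqrt{n/(\pi e)}\,\gamma_\phi(i/n)$ valid simultaneously for all $i$, then sum the $\alpha$-th powers and pass to $\int_0^1\gamma_\phi^\alpha$; the paper never isolates individual $l_{(i)}$ but decomposes the optimal tour into sub-paths $\tilde{\mathcal P}_1,\ldots,\tilde{\mathcal P}_{\tilde P}$ of exactly $n_*$ vertices each, bounds the objective below by $n_*\sum_i\Psi_\alpha\bigl(\sum_{j<i}l(\tilde{\mathcal P}_j)\bigr)$, replaces each $l(\tilde{\mathcal P}_j)$ by $n_*/(2\sqrt{\pi e\, f_{k(j)} n})$, and then applies an exchange lemma showing the resulting sum $\sum_i\Psi_\alpha\bigl(\sum_{j<i}1/\sqrt{f_{k(\sigma(j))}}\bigr)$ is minimised when $\sigma$ orders sub-paths by decreasing density. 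Your reduction to piecewise-constant densities (thin via a minorant $\phi\le f$ and use monotonicity of the objective under vertex deletion) is also different from, and arguably cleaner for a lower bound than, the paper's coupling of $(X,Y)$ with $\mathbb P(X\neq Y)\le 2\|\phi-f\|_1$.

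The one place your sketch underestimates the work is the convergence $\int_\cK g_\alpha(\phi,\cdot)\to\int_\cK g_\alpha(f,\cdot)$. This is substantially more than the pointwise limit $\tfrac{B^{\alpha+1}-A^{\alpha+1}}{(\alpha+1)(B-A)}\to A^\alpha$ you cite: the paper devotes all of Lemma~\ref{lemma:piece-wise constant approx psi-trp} to it, separately treating the countably many level sets $\{f=z_i\}$ of positive Lebesgue measure, approximating each by unions of grid squares, introducing an auxiliary function $\tilde g_\alpha$ that mixes the level structure of $f$ with the values of $\phi$, and tracking how $\phi$'s purely atomic level distribution interacts with $f$'s possibly continuous one. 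That lemma is where the bulk of the technical effort in this proposition actually sits, and your one-line justification would not close it.
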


\begin{proof}
We take the same notations as in the proof of the lower bound of Theorem 3 from \citep{blanchard2021probabilistic}. Again, we first start by the case where $f$ has support in the unit square $[0,1]^2$ and has the form
\begin{equation*}
    f = \sum_{k=1}^{m^2}f_k \mb 1_{Q_k},
\end{equation*}
where $\{Q_i\}$ is the regular partition of the unit square into $m^2$ sub-squares. We define the margin
\begin{equation*}
    \mathcal M = \bigcup_{1\leq k\leq m^2} Q_k\cap\left(\partial Q_k + \epsilon_m^{(k)} B(0,1)\right),
\end{equation*}
for $\epsilon^{(k)}_m := \frac{\epsilon}{m}\sqrt{\frac{f_*}{f_k}}.$ Note that this is a smaller margin than what was considered in the proof of the lower bound of Theorem 3 from \citep{blanchard2021probabilistic}. We can have estimates for the number of vertices in the margin similar to Lemma 3 of \citep{blanchard2021probabilistic}. Finally, we define the event $E_0$ in which for all $1\leq k\leq m^2$ such that $f_k>0$,
\begin{equation*}
     \frac{f_k}{2m^2}n\leq N_k\leq \frac{3f_k}{2m^2}n, \quad l_{TSP(Q_k)}\left(\left\lceil \epsilon\cdot e\sqrt{\pi\frac{3f_*}{2m^2}n}\right\rceil,N_k\right)> \epsilon^{(k)}_n.
\end{equation*}
Let us estimate the probability of the event $E_0$. By the proof of Lemma 4 of \citep{blanchard2021probabilistic},
\begin{equation*}
    \mathbb P\left[\left|N_k- \frac{f_k}{m^2}n\right|\geq \frac{f_k}{2m^2}n\right] \leq 2 e^{-\frac{1}{12}\cdot \frac{f_k}{m^2}n}.
\end{equation*}
We now use Corollary 1 of \citep{blanchard2021probabilistic} to each of the sub-squares. For $1\leq k \leq m^2$, such that $f_k>0$,
\begin{align*}
    \mathbb P \left[\left. l_{TSP(Q_k)}\left(\left\lceil \epsilon\cdot e\sqrt{\pi\frac{3f_*}{2m^2}n}\right\rceil,N_k\right)\leq \epsilon_m^{(k)} \right|\right.&\left.\frac{f_k}{2m^2}n< N_k< \frac{3f_k}{2m^2}n\right]\\
    &\leq \mathbb P \left[ l_{TSP(Q_k)}\left(\left\lceil\epsilon\cdot e\sqrt{\pi\frac{3f_*}{2m^2}n}\right\rceil,\frac{3f_k}{2m^2}n\right)\leq \epsilon_m^{(k)}\right]\\
    &=o\left(e^{-\epsilon\cdot e \sqrt{\pi\frac{3f_* n}{2m^2}}}\right).
\end{align*}
Finally, the probability of $E_0$ is $1-o\left(e^{-c\epsilon \sqrt{\frac{f_* n}{m^2}}}\right)$ for some constant $c>0$.

In the next steps we will assume that this event is met. We are now ready to use an equivalent of Lemma 5 from \citep{blanchard2021probabilistic} to each sub-path in $Q_k$ which is not completely included in the margin. However, we will need all sub-paths to visit same number of vertices. Denote by $l_{\Psi-TRP}$ the optimal objective and consider an optimal tour. We order the sub-paths $\mathcal P_1,\cdots \mathcal P_P$ which are not completely included in the margin. Also, we denote by $k(i)$ the index of the sub-square containing $\mathcal P_i$. We divide $\mathcal P_i$ into smaller sub-paths of length exactly $n_* = \left\lceil \epsilon\cdot e\sqrt{\pi\frac{3f_*}{2m^2}n}\right\rceil$. Since the number of vertices visited by $\mathcal P_i$ might not be a multiple of $n_*$, some vertices will be left out. For any path $\mathcal P_i$, if $n(\mathcal P_i)\geq n_*$, then at most $n(\mathcal P_i)/2$ vertices will be left out. We will denote $\mathcal P_i^1,\cdots \mathcal P_i^{t_i}$ the corresponding created sub-paths containing exactly $n_*$ edges. Note that $t_i = \left\lfloor \frac{n(\mathcal P_i)}{n_*}\right\rfloor.$ We now treat paths with $n(\mathcal P_i)<n_*$ separately, which we will call low-density paths. Let $L$ be the set of indices of low density paths. For a given low-density path $\mathcal P_i$, we artificially add $n_*-n(\mathcal P_i)$ vertices from later low-density paths $\mathcal P_j$ in the same sub-square as $\mathcal P_i$, where $j>i$. At the end of this process, at most one low-density path remains, which we will leave out. Let us denote by $\hat {\mathcal P}_i$ for $i\in \tilde L$, the corresponding constructed paths from low-density paths. Note that we have $\tilde L\subset L$, but not necessarily an equality because the process can potentially remove all vertices of some low-density paths. In the following, if $\mathcal P$ is a sub-path, we will denote by $l(\mathcal P)$ its length. Let us summarize the obtained lower bound.
\begin{align*}
    l_{\Psi-TRP} &= \sum_{v\in V} \Psi_\alpha(\text{completion time of } v)\\
    &= \sum_{1\leq i\leq P} \sum_{v\in \mathcal P_i} \Psi_\alpha(\text{completion time of } v)\\
    &\geq \sum_{i\notin L}\sum_{t=1}^{t_i} \sum_{v\in \mathcal P_i^t} \Psi_\alpha(\text{completion time of } v) + \sum_{i\in \tilde L} \sum_{v\in \hat {\mathcal P}_i} \Psi_\alpha(\text{completion time of } v)\\
    &\geq \sum_{i\notin L}\sum_{t=1}^{t_i} \sum_{v\in \mathcal P_i^t} \Psi_\alpha\left(\sum_{j=1}^{i-1} l(\mathcal P_j) + \sum_{u=1}^{t-1} l(\mathcal P_i^u)\right) + \sum_{i\in \tilde L} \sum_{v\in \hat {\mathcal P}_i} \Psi_\alpha\left(\sum_{j=1}^{i-1} l(\mathcal P_j)\right)\\
    &= n_*\left[ \sum_{i\notin L}\sum_{t=1}^{t_i} \Psi_\alpha\left(\sum_{j=1}^{i-1} l(\mathcal P_j) + \sum_{u=1}^{t-1} l(\mathcal P_i^u)\right) + \sum_{i\in \tilde L} \Psi_\alpha\left(\sum_{j=1}^{i-1} l(\mathcal P_j)\right)\right]\\
    &\geq n_*\sum_{1\leq i\leq \tilde P} \Psi_\alpha\left(\sum_{j=1}^{i-1} l(\tilde {\mathcal P}_j)\right),
\end{align*}
where we have listed the new sub-paths containing $n_*$ vertices: $\tilde {\mathcal P}_1,\cdots \tilde {\mathcal P}_{\tilde P}$ with the order given by the original tour --- the ordering where we omit added vertices to low-density sub-paths. The length of the subpath $\tilde {\mathcal P}_i$ is the length of the corresponding subpath $\tilde {\mathcal P}_j^t$ if it came from a non low-density path. Otherwise, we define it as $l(\tilde {\mathcal P}_i) := l({\mathcal P}_j)$ where $\tilde {\mathcal P}_i=\hat {\mathcal P}_j$. This corresponds to lower bounding the contribution of added vertices in low-density sub-paths, to the $\Psi-TRP$ objective. A key observation is that we can have a similar result to that of Lemma 5 from \citep{blanchard2021probabilistic}. Again, we will denote by $k(i)$ the index of the sub-square containing sub-path $\tilde {\mathcal P}_i$, i.e. $\tilde {\mathcal P}_i\subset Q_{k(i)}$.

\begin{lemma}
\label{lemma:lower bound length subpath Psi-TRP}
Let $1\leq i\leq \tilde P$. Under the event $E_0$, for $n$ sufficiently large, we can give the lower bound
\begin{equation*}
    l(\tilde {\mathcal P}_i)\geq \frac{n_*}{2\sqrt{\pi e\cdot f_{k(i)} n}}.
\end{equation*}
\end{lemma}

\begin{proof}
Under $E_0$, no path containing at $n_*$ vertices has lower length than $\epsilon_m^{(k)}$. Let us first consider the case of a sub-path $\tilde p_i$ corresponding to a sub-path of a non low-density sub-path $p_j$. Then, $\tilde p_i$ is a ``true" sub-path of the original tour and contains $n_*$ vertices. Therefore, $l_{\tilde p_i}\geq \epsilon_m^{(k(i))}$. Let us now consider a sub-path $\tilde p_i$ corresponding  to a low-density sub-path $p_j$ for $j\in L$. Recall that $p_j$ is a sub-path of $Q_{k(i)}$ which is not entirely contained in the margin. Therefore, $l_{p_j}\geq \epsilon_m^{(k(i))}.$ In summary, for all $1\leq i\leq \tilde P$,
\begin{equation*}
    l_{\tilde p_i}\geq \epsilon_m^{(k(i))} \geq \frac{n_*}{2\sqrt{\pi e\cdot f_{k(i)} n}},
\end{equation*}
where the second inequality is true for $n$ sufficiently large.
\end{proof}
Therefore, under $E_0$ we have the following lower bound,
\begin{align*}
    l_{\Psi-TRP} &\geq n_*\sum_{1\leq i\leq \tilde P} \Psi_\alpha\left(\sum_{j=1}^{i-1} l_{\tilde p_j}\right)\\
    &\geq n_*\sum_{1\leq i\leq \tilde P} \Psi_\alpha\left(\frac{1}{2\sqrt{\pi e n}}\sum_{j=1}^{i-1} \frac{n_*}{\sqrt{f_{k(j)}}}\right)\\
    &= \frac{n_*^{\alpha+1}}{2^\alpha(\pi e n)^{\alpha/2}}\sum_{1\leq i\leq \tilde P} \Psi_\alpha\left(\sum_{j=1}^{i-1}  \frac{1}{\sqrt{f_{k(j)}}}\right)\\
    &\geq \frac{n_*^{\alpha+1}}{2^\alpha(\pi e n)^{\alpha/2}}\cdot \min_{\sigma \in \mathcal S_{\tilde P}} \sum_{i} \Psi_\alpha\left(\sum_{j=1}^{i-1}  \frac{1}{\sqrt{f_{k(\sigma(j))}}}\right).
\end{align*}
Let us now give an equivalent of Lemma 6 from \citep{blanchard2021probabilistic}.
\begin{lemma}
The minimum objective of the optimization problem
\begin{equation*}
    \min_{\sigma \in \mathcal S_{\tilde P}} \sum_{i} \Psi_\alpha\left(\sum_{j=1}^{i-1}  \frac{1}{\sqrt{f_{k(\sigma(j))}}}\right).
\end{equation*}
is given by ordering sub-paths by increasing order of $\frac{1}{\sqrt{f_{k(i)}}}$, i.e. decreasing order of $f_{k(i)}$.
\end{lemma}

\begin{proof}
In this proof, we will denote by $C_\sigma$ the objective of the minimization problem for $\sigma\in\mathcal S_P$, i.e.
\begin{equation*}
    C_\sigma := \min_{\sigma \in \mathcal S_{\tilde P}} \sum_{i} \Psi_\alpha\left(\sum_{j=1}^{i-1}  \frac{1}{\sqrt{f_{k(\sigma(j))}}}\right).
\end{equation*}
Let $1\leq i<\tilde P$. We will compare $C_\sigma$ and $C_{\tilde \sigma}$ where $\tilde \sigma$ is the permutation $\sigma$ but the $i-$th and $(i+1)-$th index are interchanged:
\begin{equation*}
    \tilde\sigma(r) = \begin{cases} 
          \sigma(r), & r\notin\{i,i+1\} \\
          \sigma(i+1), & r=i \\
          \sigma(i), & r=i+1.
       \end{cases}
\end{equation*}
Then,
\begin{align*}
    C_{\tilde \sigma}-C_\sigma &= \Psi\left(\frac{1}{\sqrt {f_{k(\sigma(i+1))}}} +  \eta\right) - \Psi\left(\frac{1}{\sqrt {f_{k(\sigma(i))}}} +  \eta\right),
\end{align*}
where $\eta = \sum_{t<i}\frac{n_{p_{\sigma(t)}}}{\sqrt {f_{k(\sigma(t))}}}$. Assume that we have $\frac{1}{\sqrt {f_{k(\sigma(i+1))}}} \leq \frac{1}{\sqrt {f_{k(\sigma(i))}}}$. Then, the objective is decreases when we place $\sigma(i+1)$ in $i-$th position: $C_{\tilde \sigma} \leq C_\sigma.$ We then use this argument to order sequentially the permutation $\sigma$ by decreasing order of $f_{k(i)}$. This ends the proof of the lemma.
\end{proof}

Let us now give estimates on the right hand of the inequality. Denote by $\sigma^*$ the ordering on the sub-squares $Q_k$ such that $\frac{1}{\sqrt{f_{\sigma^*(k)}}}$ is increasing in $k$. Then under $E_0$,
\begin{align*}
    \min_{\sigma \in \mathcal S_{\tilde P}}& \sum_{i} n_*\Psi_\alpha\left(\sum_{j=1}^{i-1} \frac{n_*}{\sqrt {f_{k(\sigma(j))}}}\right) \\
    &\geq \sum_{1\leq k\leq m^2} \left(\sum_{i,\; \tilde p_i\in Q_{\sigma^*(k)}} n_* \right)\cdot \Psi_\alpha\left[\sum_{t=1}^{k-1} \frac{1}{\sqrt {f_{\sigma^*(t)}}}\left(\sum_{i,\; \tilde p_i\in Q_{\sigma^*(t)}} n_*\right)\right]\\
    &\geq \sum_{1\leq k\leq m^2} \frac{N_{\sigma^*(k)} - |V\cap Q_{\sigma^*(k)}\cap \mathcal M|-n_*}{2}  \cdot\Psi_\alpha\left[\sum_{t=1}^{k-1} \frac{N_{\sigma^*(t)} - |V\cap Q_{\sigma^*(t)}\cap \mathcal M|-n_*}{2\sqrt {f_{\sigma^*(t)}}}\right]\\
    &\geq \frac{1}{2^{\alpha+1}}\sum_{1\leq k\leq m^2} N_{\sigma^*(k)} \cdot\Psi_\alpha\left[\sum_{t=1}^{k-1} \frac{N_{\sigma^*(t)}}{\sqrt {f_{\sigma^*(t)}}}\right] -\frac{1}{2^{\alpha+1}}\sum_{1\leq k\leq m^2} (|V\cap Q_{\sigma^*(k)}\cap \mathcal M|+n_*) \cdot\Psi_\alpha\left[\frac{n}{\sqrt {f_*}}\right]\\
    &\quad\quad\quad\quad-\frac{\alpha}{2^{\alpha+1}}\sum_{1\leq k\leq m^2} N_{\sigma^*(k)}\sum_{t=1}^{k-1}\frac{|V\cap Q_{\sigma^*(t)}\cap \mathcal M|+n_*}{\sqrt{f_*}} \cdot\Psi_{\alpha-1}\left[\frac{n}{\sqrt {f_*}}\right]\\
    &\geq \frac{n^{\alpha+1}}{4^{\alpha+1}m^{2\alpha+2}}\sum_{1\leq k\leq m^2} f_{\sigma^*(k)} \cdot\Psi_\alpha\left[\sum_{t=1}^{k-1} \sqrt {f_{\sigma^*(t)}}\right] -\frac{|V\cap\mathcal M|+m^2n_*}{2^{\alpha+1}} \cdot\Psi_\alpha\left[\frac{n}{\sqrt {f_*}}\right]\\
    &\quad\quad\quad\quad-\frac{\alpha \cdot n}{2^{\alpha+1}}\frac{|V\cap \mathcal M|+m^2n_*}{\sqrt{f_*}} \cdot\Psi_{\alpha-1}\left[\frac{n}{\sqrt {f_*}}\right]\\
    &= \frac{n^{\alpha+1}}{4^{\alpha+1}m^{2\alpha+2}}\sum_{1\leq k\leq m^2} f_{\sigma^*(k)} \cdot\Psi_\alpha\left[\sum_{t=1}^{k-1} \sqrt {f_{\sigma^*(t)}}\right] -\frac{n^\alpha (1+\alpha)}{2^{\alpha+1}f_*^{\alpha/2}}(|V\cap \mathcal M|+m^2n_*).
\end{align*}
Using Lemma 3 of \citep{blanchard2021probabilistic}, we obtain that with probability $1-o(e^{-c\epsilon \sqrt{\frac{f_* n}{m^2}}})$, the event $E_0$ is met and $|V\cap\mathcal M|\leq 8\epsilon n$. Therefore, we can take $\epsilon>0$ sufficiently small so that
\begin{equation*}
    l_{TRP} \geq \frac{1+o(1)}{8^{1+\alpha} (\pi e)^{\alpha/2}m^{2\alpha+2}} n^{1+\alpha/2}\sum_{1\leq k\leq m^2} f_{\sigma^*(k)} \cdot\Psi_\alpha\left[\sum_{t=1}^{k-1} \sqrt {f_{\sigma^*(t)}}\right] .
\end{equation*}

Define a new constant $c_\alpha :=\frac{1}{8^{1+\alpha} (\pi e)^{\alpha/2}}$, we now obtain the desired result.
\begin{align*}
  \liminf_{n\to\infty} \frac{\mathbb E\left[ l_{\Psi-TRP}\right]}{n^{1+\alpha/2}} &\geq  \liminf_{n\to\infty} \mathbb P(E_0)\cdot \frac{\mathbb E[l_{\Psi-TRP}|E_0]}{n^{1+\alpha/2}}\\
  &\geq  \liminf_{n\to\infty}  \left(1-o\left(e^{-c\epsilon \sqrt{\frac{f_* n}{m^2}}}\right)\right) \frac{c_\alpha(1+o(1))}{m^{2\alpha+2}}\sum_{1\leq k\leq m^2} f_{\sigma^*(k)} \Psi_\alpha\left[\sum_{t=1}^{k-1} \sqrt {f_{\sigma^*(t)}}\right]\\
  &\geq \frac{c_\alpha}{m^{2\alpha+2}}\sum_{1\leq k\leq m^2} f_{\sigma^*(k)} \cdot\Psi_\alpha\left[\sum_{t=1}^{k-1} \sqrt {f_{\sigma^*(t)}}\right] .
\end{align*}

We will now make the link between the discrete sum and the integral formula. To do so, we aggregate sub-squares who have same density $f_k$. If $f^1> \cdots > f^S$ are values taken by the density function, We obtain a partition $\{1,\cdots,m^2\} = \bigcup_{1\leq s\leq S} A_s,$ where $A_s=\{k:\; f_k=f^s\}$ contains the indices of sub-squares having density $f^s$. Note that because the values $f^1,\cdots,f^S$ are ordered, so are the sets $A_s$ i.e. all elements of $A_2$ are larger than elements of $A_1$, etc. Then,
\begin{align*}
    &\frac{1}{m^{2\alpha+2}}\sum_{1\leq k\leq m^2} f_{\sigma^*(k)} \cdot\Psi_\alpha\left[\sum_{t=1}^{k-1} \sqrt {f_{\sigma^*(t)}}\right]\\
    &= \sum_{1\leq k\leq m^2}\int_{Q_{\sigma^*(k)}} f(x) \left(\int_\cK  \sqrt f\cdot \mb  1_{Q_{\sigma^*(1)}\cup\cdots\cup Q_{\sigma^*(k-1)}} \right)^\alpha dx\\
    &= \sum_{s=1}^S \sum_{k\in A_s} \int_{Q_{\sigma^*(k)}} f(x) \left(\int_\cK \sqrt f \cdot \mb 1_{f<f(x)} + \sqrt {f(x)}\cdot \mb 1_{\bigcup_{l\in A_s, l<k} Q_{\sigma^*(k)}} \right)^\alpha dx.
\end{align*}
Therefore, 
\begin{align*}
    \sum_{s=1}^S \int_{0}^{\mathcal A\left(\bigcup_{l\in A_s} Q_{\sigma^*(k)}\right)} &f(x) \left(\int_\cK \sqrt f \cdot \mb 1_{f<f(x)} + t\sqrt {f(x)}\right)^\alpha dt - \frac{1}{m^{2\alpha+2}}\sum_{1\leq k\leq m^2} f_{\sigma^*(k)} \cdot\Psi_\alpha\left[\sum_{t=1}^{k-1} \sqrt {f_{\sigma^*(t)}}\right] \\
    &= \sum_{s=1}^S \sum_{k\in A_s} f(x) \int_0^{\mathcal A(Q_{\sigma^*(k)})}\left[ \left(\int_\cK \sqrt f \cdot \mb 1_{f<f(x)} + \sqrt {f(x)}\cdot \mathcal A(\cup_{l\in A_s, l<k} Q_{\sigma^*(k)}) + t\sqrt{f(x)} \right)^\alpha\right.\\
    &\quad \quad \quad - \left. \left(\int_\cK \sqrt f \cdot \mb 1_{f<f(x)} + \sqrt {f(x)}\cdot \mathcal A(\cup_{l\in A_s, l<k} Q_{\sigma^*(k)}) \right)^\alpha \right] dt\\
    &\leq \sum_{s=1}^S \sum_{k\in A_s} f(x) \int_0^{\mathcal A(Q_{\sigma^*(k)})} \alpha \left(\int_\cK \sqrt f\right)^{\alpha-1} t\sqrt{f(x)}dt\\
    &=  \frac{\alpha}{2m^2} \left(\int_\cK \sqrt f\right)^{\alpha-1} \int_\cK f^{3/2}.
\end{align*}
Also note that
\begin{equation*}
    \sum_{s=1}^S \int_{0}^{\mathcal A\left(\bigcup_{l\in A_s} Q_{\sigma^*(k)}\right)} f(x) \left(\int_\cK \sqrt f \cdot \mb 1_{f<f(x)} + t\sqrt {f(x)}\right)^\alpha dt
    = \int_\cK g_\alpha(f,x)dx.
\end{equation*}
Finally, we have
\begin{equation*}
  \liminf_{n\to\infty} \frac{\mathbb E\left[ l_{\Psi-TRP}\right]}{n^{1+\alpha/2}}
  \geq c_\alpha \int_\cK g_\alpha(f,x) dx  - \frac{c_\alpha \alpha}{2m^2} \left(\int_\cK \sqrt f\right)^{\alpha-1} \int_\cK f^{3/2}.
\end{equation*}
We can repeat the same procedure with a finest partition of the unit square $[0,1]^2$ into $(\beta m)^2$ sub-squares where $\beta\in \mathbb N^*$. For $\beta$ sufficiently large, we obtain
\begin{equation*}
    \frac{\alpha}{2m^2} \left(\int_\cK \sqrt f\right)^{\alpha-1} \int_\cK f^{3/2}\leq \frac{1}{2}\int_\cK g_\alpha(f,x) dx.
\end{equation*}
Then, with this partition we obtain the desired result
\begin{equation*}
    \liminf_{n\to\infty} \frac{\mathbb E\left[ l_{\Psi-TRP}\right]}{n^{1+\alpha/2}}
  \geq \tilde c_\alpha \int_\cK g_\alpha(f,x) dx,
\end{equation*}
where $\tilde c_\alpha = \frac{c_\alpha}{2(\alpha+1)} = \frac{1}{2\cdot 8^{\alpha+1} (\pi e)^{\alpha/2}(\alpha+1)}$. This ends the proof for the densities of the form
\begin{equation*}
    f(x) = \sum_{k=1}^{m^2}f_k \mb{1}_{Q_k}(x).
\end{equation*}
Note that with the same proof, we can tighten the constant $\tilde c_\alpha$ to be $\frac{1}{(\pi e)^{\alpha/2}(\alpha+1)}$.

We now turn to general distributions with continuous densities. To do so, we need an equivalent of Lemma 7 from \citep{blanchard2021probabilistic}, which is given by Lemma~\ref{lemma:piece-wise constant approx psi-trp}. Similarly to the proof of the lower bound of Theorem 3 of \citep{blanchard2021probabilistic}, let us now consider the general case of an absolutely continuous density $f$ on a compact space $\cK$. By a scaling argument, we can suppose without loss of generality that $\cK\subset[0,1]^2$. For any $\epsilon>0$, we use Lemma~\ref{lemma:piece-wise constant approx psi-trp} to take a density $\phi$ of the same form as above
\begin{equation*}
    \phi(x) = \sum_{k=1}^{m^2}\phi_k \mb{1}_{Q_k}(x),
\end{equation*}
such that $\|\phi-f\|_\infty \leq \epsilon$ and $ \left|\int_\cK g_\alpha(\phi)-g_\alpha(f) \right| \leq \epsilon.$ By a coupling argument, we can construct a joint distribution $(X,Y)$ such that $X$ (resp. $Y$) has density $f$ (resp. $\phi$), and $\mathbb P(X\neq Y)\leq 2\int_\cK |\phi(x)-f(x)|dx \leq 2\epsilon.$ On the event $\{X_i=Y_i, 1\leq i\leq n\}$, the $\Psi-$TRP lengths coincide. Therefore, we can use the estimates on $\phi$ to show that
\begin{align*}
    \liminf_{n\to\infty}\frac{\mathbb E[l_{TRP}(f)]}{n^{\alpha/2}}
    &\geq (1-2\epsilon)^n c_\alpha \int_\cK g_\alpha(\phi)\\
    &\geq (1-2\epsilon)^n c_\alpha \left(\int_\cK g_\alpha(f) - \epsilon\right).
\end{align*}
Since this is valid for any $\epsilon>0$, the desired result follows.
\begin{equation*}
    \liminf_{n\to\infty}\frac{\mathbb E[l_{TRP}(f)]}{n^{\alpha/2}} \geq c_\alpha \int_\cK g_\alpha(f)
\end{equation*}
This ends the proof of the Proposition.
\end{proof}

\subsection{An upper bound}
We now give a constructive proof of an upper bound. The resulting constructed tour is constant-factor from the optimal $\Psi-$TRP tour.

\begin{prop}
\label{prop:upper bound Psi TRP}
Assume all $n$ vertices are drawn according to a distribution with density $f$ on a compact space $\cK\subset \mathbb R^2$. Let $\alpha\geq 1$ and $\Psi_\alpha:x\mapsto x^\alpha$ the power function. Denote by $l_{\alpha-TRP}$ the optimal TRP objective of a tour for the $\Psi_\alpha-$TRP. Then,
\begin{equation*}
    \liminf_{n\to\infty} \frac{\mathbb E\left[ l_{\Psi-TRP}\right]}{n^{1+\alpha/2}} \leq C_\alpha\int_\cK g_\alpha(f,x)dx,
\end{equation*}
where $C_\alpha>0$ is a constant depending only on $\alpha.$ Furthermore, there exists a simple way to construct a tour that achieves the provided upper bound.
\end{prop}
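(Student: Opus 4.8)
The plan is to mirror the upper-bound argument for the classical TRP from \citep{blanchard2021probabilistic}, tracking the effect of the convex power $\Psi_\alpha$ on each latency term. First I would reduce to a piece-wise constant density $\phi=\sum_k \phi_k\mb 1_{Q_k}$ on the unit square using Lemma~\ref{lemma:piece-wise constant approx psi-trp} and the same coupling argument as in the lower bound: on the event $\{X_i=Y_i,\ 1\le i\le n\}$, which holds with probability $(1-2\epsilon)^n$, the $\Psi_\alpha$-TRP objectives for $f$ and $\phi$ coincide, and $\left|\int_\cK g_\alpha(\phi)-g_\alpha(f)\right|\le\epsilon$. Since $(1-2\epsilon)^n\to 1$ is false, one instead uses that the worst-case objective is $O(n^{1+\alpha/2})$ deterministically (each latency is at most $O(\sqrt n)$ by the BHH-type bound, so $\sum_i l_i^\alpha = O(n\cdot n^{\alpha/2})$), so the contribution of the bad event $\{X_i\ne Y_i \text{ for some } i\}$ to $\mathbb E[l_{\Psi-TRP}(f)]/n^{1+\alpha/2}$ is $O(1-(1-2\epsilon)^n)$, which does not vanish; the fix is to let $\epsilon=\epsilon_n\to 0$ slowly, or equivalently to take $\epsilon$ fixed, derive the bound $C_\alpha\int g_\alpha(\phi)+o(n^{1+\alpha/2})$-style estimate, and then send $\epsilon\to 0$ after $n\to\infty$ using continuity of $\epsilon\mapsto\int g_\alpha$.

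The core of the argument is the explicit construction for piece-wise constant $\phi$. I would sort the sub-squares $Q_{\sigma^*(1)},\dots,Q_{\sigma^*(m^2)}$ by \emph{decreasing} density $\phi_k$, and build the tour by servicing them in that order: within each $Q_{\sigma^*(k)}$ run a near-optimal TSP path on the $N_{\sigma^*(k)}$ points it contains, then hop to the next sub-square. By the BHH theorem, conditional on the point counts, $l_{TSP(Q_{\sigma^*(k)})}\le (1+o(1))\beta\sqrt{N_{\sigma^*(k)}}\cdot \tfrac{1}{m}$ with high probability, and $N_{\sigma^*(k)}\approx \tfrac{\phi_{\sigma^*(k)}}{m^2}n$, so this sub-path has length $O\!\left(\tfrac{\sqrt{\phi_{\sigma^*(k)}}}{m^2}\sqrt n\right)$. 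The latency of any vertex in $Q_{\sigma^*(k)}$ is then at most the total length of all earlier sub-paths plus inter-square hops, i.e. $\le C\tfrac{\sqrt n}{m^2}\sum_{t=1}^{k}\sqrt{\phi_{\sigma^*(t)}} + O(k/m)$ (the hop terms are lower order when $m$ is fixed and $n\to\infty$). Raising to the $\alpha$ power and summing over the $\approx \tfrac{\phi_{\sigma^*(k)}}{m^2}n$ vertices in $Q_{\sigma^*(k)}$ yields
\begin{equation*}
  \mathbb E[l_{\Psi-TRP}] \le (1+o(1))\, C_\alpha' \, \frac{n^{1+\alpha/2}}{m^{2\alpha+2}} \sum_{k=1}^{m^2}\phi_{\sigma^*(k)}\left(\sum_{t=1}^{k}\sqrt{\phi_{\sigma^*(t)}}\right)^\alpha,
\end{equation*}
which is exactly the Riemann-sum form of $C_\alpha\int_\cK g_\alpha(\phi,x)\,dx$ (the partial sum up to $k$ rather than $k-1$ only changes things by $O(1/m)$ inside the power, absorbed as $m\to\infty$, and the degenerate-level correction in $g_\alpha$ arises precisely as in the lower bound when several $\phi_k$ coincide). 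Combining with the coupling reduction gives the claim.

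The main obstacle I anticipate is handling the inter-square travel and the "boundary" contribution to latency cleanly: one must verify that the hops between sub-squares, and the order in which points within a sub-square are served, contribute only a lower-order amount to $\sum_i l_i^\alpha$, which requires that the within-square TSP ordering be chosen so that early-served points in late sub-squares do not inflate the bound — this is automatic since we upper-bound every latency in $Q_{\sigma^*(k)}$ by the same quantity $\sum_{t\le k}l(\tilde p_t)$. A secondary technical point, exactly as flagged for the lower bound, is the degenerate case $\int_\cK\mb 1_{f=f(x)}>0$: when a whole band of sub-squares shares a density value, the servicing order within the band is arbitrary, and one must check that the resulting sum telescopes to the $\frac{(\cdots)^{\alpha+1}-(\cdots)^{\alpha+1}}{(\alpha+1)\int\mb 1_{f=f(x)}}$ form rather than the naive $f(x)(\int\sqrt f\mb 1_{f<f(x)})^\alpha$ form; this is the same discrete-to-integral computation as in the lower bound (the identity $\int_0^A(u+t\sqrt{f(x)})^\alpha\,\sqrt{f(x)}\,dt$ style integral), just with the inequality reversed, and it goes through because the power function is monotone. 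Everything else (Hoeffding concentration for the $N_k$, the naive $O(\sqrt n)$ fallback on the low-probability bad event, sending $m\to\infty$ and then $\epsilon\to 0$) is routine and parallels the TRP proof.
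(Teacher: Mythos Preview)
Your core construction---sort sub-squares by decreasing density of the piece-wise constant approximation $\phi$, run a TSP within each, upper-bound every latency in $Q_{\sigma^*(k)}$ by the accumulated sub-tour lengths, and recognize the result as a Riemann sum for $\int_\cK g_\alpha(\phi)$---is exactly what the paper does, and your handling of the inter-square hops and the $k$ versus $k-1$ discrepancy is correct.

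The gap is in your reduction from $f$ to $\phi$. You propose the coupling used in the lower bound, correctly observe that the bad event $\{\exists i:X_i\ne Y_i\}$ has probability $1-(1-2\epsilon)^n\to 1$ for fixed $\epsilon$, and then offer two fixes, neither of which works cleanly. Taking $\epsilon$ fixed and sending $\epsilon\to 0$ after $n\to\infty$ fails because the bad-event contribution to $\mathbb E[l_{\Psi\text{-}TRP}(f)]/n^{1+\alpha/2}$ is $\Theta(1)$, not $o(1)$, for every fixed $\epsilon>0$. Taking $\epsilon=\epsilon_n\to 0$ with $n\epsilon_n\to 0$ forces $\phi$ (hence $m$) to vary with $n$, which breaks the ``$m$ fixed, $n\to\infty$'' asymptotics you rely on for the BHH bound and the $o(n^{1+\alpha/2})$ error terms.

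The paper sidesteps this entirely: it does \emph{not} couple. Points are drawn from $f$ throughout; $\phi$ is used only to define the ordering $\sigma$ of the sub-squares and, via Lemma~\ref{lemma:piece-wise constant approx psi-trp}, to ensure $\big|\int g_\alpha(\phi)-\int g_\alpha(f)\big|\le\epsilon$. The concentration event $E_0$ concerns the actual counts $N_k$ of $f$-points in $Q_k$, and $\mathbb P(E_0^c)=o(e^{-c\phi_* n/m^2})$ is exponentially small in $n$ (not $1-(1-2\epsilon)^n$), so the fallback term $O(n^{1+\alpha/2})\cdot\mathbb P(E_0^c)$ genuinely vanishes. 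After that, the computation on $E_0$ is exactly the one you wrote down. Also note that in the paper $\phi$ is chosen with all $\phi_k$ distinct, so the degenerate-level case never appears in the discrete sum; it is absorbed entirely into Lemma~\ref{lemma:piece-wise constant approx psi-trp}.
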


\begin{proof}
Let $\epsilon>0$. Take $m>0$ and a piece-wise constant density $\phi$ approximating $f$, given by Lemma~\ref{lemma:piece-wise constant approx psi-trp}. Similarly to the tour constructed in the proof of the upper bound of Theorem 3 of \citep{blanchard2021probabilistic}, if we order the sub-squares by decreasing value of $\phi$ and denote $\sigma$ this ordering, our tour will follow a TSP tour on $Q_{\sigma(1)}$, then on $Q_{\sigma(2)}$, until $Q_{\sigma(m^2)}$. We will now show that this tour is constant-factor optimal on the high-event probability $E_0$ in which
\begin{equation*}
\frac{\phi_k}{2m^2}n\leq N_k\leq \frac{3\phi_k}{2m^2}n,
\end{equation*}
for all $1\leq k\leq m^2$, where $N_k$ is the count of vertices in sub-square $Q_k$. By the Chernoff bound, $\mathbb P(E_0) = 1-o\left(e^{-c\frac{\phi_* n}{m^2}}\right),$ where $\phi_*:=\min\{\phi_k\}$ and $c>0$ a constant. Let us now analyze the $\Psi-$TRP objective of this tour on $E_0$. On each sub-square, by the BHH theorem, the length $l^k_{TSP}$ of the optimal TSP satisfies
\begin{equation*}
l_{TSP}^k\leq C\beta_{TSP}\sqrt{\frac{3\phi_k n}{2}}\frac{1}{m^2}.
\end{equation*}
for $C>0$ a constant and any $n$ sufficiently large. Then, if $\hat l_{\Psi-TRP}$ denotes the objective of the constructed tour, for $n$ sufficiently large,
\begin{align*}
    \hat l_{\Psi-TRP} &\leq \sum_{k=1}^{m^2} N_{\sigma(k)}\Psi_\alpha\left( \sum_{l=1}^k l_{TSP}^{\sigma(l)} + (k-1)\sqrt{2}\right)\\
    &\leq \frac{3n}{2m^2}\sum_{k=1}^{m^2} \phi_{\sigma(k)} \Psi_\alpha\left( C\beta_{TSP}\sqrt{\frac{3n}{2}}\frac{1}{m^2} \sum_{l=1}^k \sqrt{\phi_{\sigma(l)}}+ (k-1)\sqrt{2}\right)\\
    &\leq \left(\frac{3}{2}\right)^{1+\alpha/2}C^\alpha\beta_{TSP}^\alpha \frac{n^{1+\alpha/2}}{m^2}\sum_{k=1}^{m^2} \phi_{\sigma(k)} \left[\Psi_\alpha\left(\frac{1}{m^2}\sum_{l=1}^k \sqrt{\phi_{\sigma(l)}}\right) + \frac{2\alpha k}{C\beta_{TSP}\sqrt{3n}}\cdot 2\left\|\sqrt \phi\right\|_1^{\alpha-1}\right]\\
    &\leq \left(\frac{3}{2}\right)^{1+\alpha/2}C^\alpha\beta_{TSP}^\alpha \frac{n^{1+\alpha/2}}{m^{2+2\alpha}}\sum_{k=1}^{m^2} \phi_{\sigma(k)} \Psi_\alpha\left(\sum_{l=1}^k \sqrt{\phi_{\sigma(l)}}\right) + \left(\frac{3}{2}\right)^{1+\alpha/2}(C\beta_{TSP})^{\alpha-1} \frac{4\alpha m^2 n^{(\alpha+1)/2}}{\sqrt 3}.
\end{align*}
Therefore, with $C_\alpha := \left(\frac{3}{2}\right)^{1+\alpha/2}C^\alpha\beta_{TSP}^\alpha$, we obtain
\begin{align*}
  \liminf_{n\to\infty} \frac{\mathbb E\left[ l_{\Psi-TRP}\right]}{n^{1+\alpha/2}} 
  &\leq \frac{C_\alpha}{m^{2\alpha+2}}\sum_{1\leq k\leq m^2} \phi_{\sigma(k)} \cdot\Psi_\alpha\left[\sum_{t=1}^k \sqrt {\phi_{\sigma(t)}}\right]\\
  &\leq \frac{C_\alpha}{m^{2\alpha+2}}\sum_{1\leq k\leq m^2} \phi_{\sigma(k)} \cdot\Psi_\alpha\left[\sum_{t=1}^{k-1} \sqrt {\phi_{\sigma(t)}}\right] + \frac{C_\alpha}{m^4}\sum_{1\leq k\leq m^2} \phi_{\sigma(k)} \cdot \alpha\sqrt {\phi_{\sigma(k)}}\\
  &\leq C_\alpha\int_\cK g_\alpha (\phi) + \frac{C_\alpha}{m^2}\int_\cK \phi^{3/2}. 
\end{align*}
We can take  a finer subdivision and take $m$ sufficiently large so that finally,
\begin{equation*}
  \liminf_{n\to\infty} \frac{\mathbb E\left[ l_{\Psi-TRP}\right]}{n^{1+\alpha/2}} 
  \leq \tilde C_\alpha\int_\cK g_\alpha (\phi),
\end{equation*}
where $\tilde C_\alpha=2C_\alpha$. Note that with the same proof we can get the same result with $\tilde C_\alpha = \beta_{TSP}^\alpha.$ This ends the proof.
\end{proof}

\subsection{Technical lemma}

\begin{lemma}
\label{lemma:piece-wise constant approx psi-trp}
Let $f$ be a density on $\cK\subset [0,1]^2$. For any $\epsilon>0$, there exists a density $\phi$ of the form
\begin{equation*}
    \phi(x) = \sum_{k=1}^{m^2}\phi_k \mb{1}_{Q_k}(x)
\end{equation*}
such that
\begin{equation*}
    \|\phi-f\|_1 \leq \epsilon,\quad\text{and}\quad \left|\int_{\cK} g_\alpha(\phi)-\int_{\cK}g_\alpha(f) \right| \leq \epsilon.
\end{equation*}
\end{lemma}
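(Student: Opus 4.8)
Here is the plan I would follow.

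\emph{Step 1: choice of $\phi$.} For $m\in\mathbb N^*$ I would take $\phi^{(m)}:=\sum_{k=1}^{m^2}\phi^{(m)}_k\mb 1_{Q_k}$ with $\phi^{(m)}_k:=\frac1{|Q_k|}\int_{Q_k}f$; each $\phi^{(m)}$ is a density, and since cell--averaging is an $L^1$--contraction that converges uniformly on continuous functions, $\|\phi^{(m)}-f\|_1\to0$ by density of continuous functions in $L^1$. Thus the first requirement $\|\phi-f\|_1\le\epsilon$ holds for $m$ large, and the whole problem reduces to showing $\int_\cK g_\alpha(\phi^{(m)})\to\int_\cK g_\alpha(f)$. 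The approach is to rewrite this functional as a one--dimensional integral of a quantile function, for which $L^1$ continuity is visible.

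\emph{Step 2: quantile representation.} For a density $g$ on $\cK$ set $T_g:=\int_\cK\sqrt g$, $F_g(t):=\int_\cK\sqrt g\,\mb 1_{g\le t}$, and let $\Phi_g(s):=\inf\{t\ge0:F_g(t)\ge s\}$ on $(0,T_g)$. The key identity is
\begin{equation*}
  \int_\cK g_\alpha(g,x)\,dx=\int_0^{T_g}s^\alpha\sqrt{\Phi_g(s)}\,ds ,
\end{equation*}
which I would prove by a layer--cake argument: split the range of $g$ into the at most countable set of ``flat'' levels $t$ with $|\{g=t\}|>0$ and its complement; on a flat level the second branch of the definition of $g_\alpha$ is exactly $\sqrt{g(x)}$ times the average of $s\mapsto s^\alpha$ over the corresponding interval of values of $F_g$, and on the complement the change of variables $s=F_g(g(x))$ (under which $\sqrt g\,dx$ pushes forward to Lebesgue measure) turns the first branch into the stated integrand. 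Writing $u=s/T_g$, $\widehat\Phi_g(u):=\Phi_g(uT_g)$, this becomes $\int_\cK g_\alpha(g)=T_g^{\alpha+1}\int_0^1u^\alpha\sqrt{\widehat\Phi_g(u)}\,du$, where $\widehat\Phi_g$ is the quantile function of the probability measure $\nu_g$ obtained by pushing $\frac1{T_g}\sqrt g\,dx$ forward through $g$; the case $\alpha=0$ gives the normalization $\int_0^1\sqrt{\widehat\Phi_g}\,du=1/T_g$ (mass conservation), which drives the limit.

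\emph{Step 3: passing to the limit.} From $|\sqrt a-\sqrt b|\le\sqrt{|a-b|}$ and Cauchy--Schwarz, $\|\sqrt{\phi^{(m)}}-\sqrt f\|_1\le|\cK|^{1/2}\|\phi^{(m)}-f\|_1^{1/2}\to0$, hence $T_{\phi^{(m)}}\to T_f>0$. Along any subsequence on which $\phi^{(m)}\to f$ a.e., uniform integrability of $(\sqrt{\phi^{(m)}})_m$ together with $\mb 1_{\phi^{(m)}\le t}\to\mb 1_{f\le t}$ a.e.\ (valid for every $t$ with $|\{f=t\}|=0$, i.e.\ for all but countably many $t$) gives $F_{\phi^{(m)}}(t)\to F_f(t)$ at every continuity point of $F_f$; the limit is subsequence--independent, so this holds along the full sequence, giving $\nu_{\phi^{(m)}}\Rightarrow\nu_f$ and hence $\widehat\Phi_{\phi^{(m)}}\to\widehat\Phi_f$ a.e.\ on $(0,1)$. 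Now $\sqrt{\widehat\Phi_{\phi^{(m)}}}\ge0$, converges a.e.\ to $\sqrt{\widehat\Phi_f}$, and has integral $1/T_{\phi^{(m)}}\to1/T_f=\int_0^1\sqrt{\widehat\Phi_f}$, so Scheffé's lemma gives convergence in $L^1(0,1)$; multiplying by $u^\alpha\le1$ and by $T_{\phi^{(m)}}^{\alpha+1}\to T_f^{\alpha+1}$ yields $\int_\cK g_\alpha(\phi^{(m)})\to\int_\cK g_\alpha(f)$. Choosing $m$ large enough that $\|\phi^{(m)}-f\|_1\le\epsilon$ and $|\int g_\alpha(\phi^{(m)})-\int g_\alpha(f)|\le\epsilon$ finishes the proof.

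\emph{Main obstacle.} The delicate point is Step 3: $L^1$ convergence of the densities says nothing about their level sets, so the two branches of $g_\alpha$ cannot be tracked separately and there is no uniform $L^\infty$ bound on $\phi^{(m)}$ to justify dominated convergence when $f$ is unbounded. The quantile representation of Step 2 is precisely what dissolves the level--set issue, and the $\alpha=0$ mass--conservation identity is exactly what lets Scheffé's lemma upgrade a.e.\ convergence of the quantiles to $L^1$ convergence.
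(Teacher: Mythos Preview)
Your proof is correct and takes a genuinely different route from the paper. The paper proceeds by direct estimation: it fixes a piecewise-constant $\phi$ with all cell values distinct, introduces an intermediate functional $\tilde g_\alpha(\phi,x)$ that uses the values of $\phi$ but the level sets of $f$, and then chains together three comparisons $g_\alpha(f)\leftrightarrow\tilde g_\alpha(\phi)\leftrightarrow\phi(\cdot)\bigl(\int\sqrt\phi\,\mb 1_{\phi<\phi(\cdot)}\bigr)^\alpha\leftrightarrow g_\alpha(\phi)$, each controlled by ad hoc $\epsilon$--bounds; the middle comparison is lengthy and requires separate treatment of each flat level of $f$, approximating it by a union of grid cells. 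Your approach instead discovers the closed form $\int_\cK g_\alpha(g)=\int_0^{T_g}s^\alpha\sqrt{\Phi_g(s)}\,ds$, which absorbs the two-branch definition of $g_\alpha$ into a single quantile integral; continuity of this functional under $L^1$ perturbation of $g$ then reduces to weak convergence of the push-forward measures $\nu_g$ plus Scheff\'e's lemma (enabled by the mass identity $\int_0^1\sqrt{\widehat\Phi_g}=1/T_g$). This is considerably shorter and more conceptual, and has the side benefit of not requiring the cell values $\phi_k$ to be distinct. The paper's argument, while longer, is entirely elementary and self-contained, never invoking quantile functions, weak convergence, or Scheff\'e.
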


\begin{proof}
Let $\delta>0$ an error parameter. Similarly to the proof of Lemma 7 from \citep{blanchard2021probabilistic}, we first take a density $\phi_\epsilon$ of the right form such that $\epsilon\leq \delta$ and
\begin{equation*}
    \|\phi_\epsilon-f\|_1 \leq \epsilon,\quad\text{and}\quad \|\sqrt{\phi_\epsilon}-\sqrt{f}\|_1\leq \epsilon.
\end{equation*}
We choose $\phi_\epsilon$ such that all $\phi_k$ are distinct. We will now write $\phi$ instead of $\phi_\epsilon$. Again, $\|\sqrt f\|_1,\|\sqrt\phi\|_1\leq 1$. We first introduce a new function $\tilde g_\phi$ which we will use as intermediary.
\begin{equation*}
    \tilde g_\alpha(\phi,x) = \begin{cases} 
          \phi(x)\left(\int_\cK\sqrt \phi \cdot \mb 1_{f< f(x)}\right)^\alpha & \text{if }\int_\cK \mb 1_{f=f(x)}=0 ,\\
          \sqrt{f(x)}\frac{\left(\int_\cK\sqrt{f}\cdot \mb 1_{f\leq f(x)}\right)^{\alpha+1}-\left(\int_\cK\sqrt{f}\cdot \mb 1_{f<f(x)}\right)^{\alpha+1}}{(\alpha+1)\int_\cK \mb 1_{f=f(x)}}  & \text{otherwise,}
       \end{cases}
\end{equation*}
Let us start by giving an estimate that will later be useful.
\begin{align*}
    \left|\left(\int_\cK\sqrt \phi \cdot \mb 1_{f< f(x)}\right)^\alpha-\left(\int_\cK\sqrt f \cdot \mb 1_{f< f(x)}\right)^\alpha\right|&\leq \int_\cK \left|\sqrt \phi-\sqrt f\right| \cdot \mb 1_{f< f(x)}\\
    &\cdot \alpha \max\left(\int_\cK\sqrt \phi \cdot \mb 1_{f< f(x)}, \int_\cK\sqrt f \cdot \mb 1_{f< f(x)}\right)^{\alpha-1}\\
    &\leq \alpha\cdot \delta
\end{align*}
The first step will be to compare $\tilde g_\alpha(\phi)$ and $g_\alpha(f)$. Similarly to the proof of the lower bound of Theorem 3 from \citep{blanchard2021probabilistic}, we can define the function
\begin{equation*}
    h(z):= \int_\cK \mb 1_{f=z}.
\end{equation*}
Recall that $h$ is non-zero only on a countable number of values which we will denote by $z_i$ for $i\geq 1$. Then,
\begin{align*}
    \left|\int_\cK \tilde g_\alpha(\phi) - g_\alpha(f)\right| &= \int_{f\notin\{z_1,\cdots,z_Z\}} \left|\phi(x)\left(\int_\cK\sqrt \phi \cdot \mb 1_{f< f(x)}\right)^\alpha-f(x)\left(\int_\cK\sqrt f \cdot \mb 1_{f< f(x)}\right)^\alpha\right|\\
    &\leq \int_{f\notin\{z_1,\cdots,z_Z\}} |\phi(x)-f(x)| \left\|\sqrt f \right\|_1^{\alpha/2} + \phi(x) \cdot\epsilon\alpha \max\left(\left\|\sqrt{f}\right\|_1^{\alpha-1},\left\|\sqrt{\phi}\right\|_1^{\alpha-1}\right)\\
    &\leq (1+\alpha)\delta.
\end{align*}
We now compare $g_\alpha(\phi)$ to $\phi(\cdot)\left(\int_\cK \sqrt \phi \cdot \mb 1_{\phi< \phi(\cdot)}\right)^\alpha$. For all $1\leq k\leq m^2$, we will denote $\eta_k:=\int_\cK \sqrt \phi \cdot \mb 1_{\phi< \phi_k}$. We now use the fact that all $\phi_k$ are distinct. By definition of $g_\alpha(\phi)$,
\begin{align*}
    \int_\cK \left|g_\alpha(\phi, x) - \phi(x)\left(\int_\cK \sqrt \phi \cdot \mb 1_{\phi< \phi(x)}\right)^\alpha \right|dx &= \sum_{k=1}^{m^2} \phi_k\int_{t=0}^{\mathcal A(Q_k)} \left[(\eta_k+t)^\alpha - \eta_k^\alpha\right] dt\\
    &\leq \sum_{k=1}^{m^2} \frac{\phi_k}{m^2} \cdot \frac{\alpha}{m^2} \left(\int_\cK\sqrt \phi\right)^{\alpha-1}\\
    &\leq \frac{\alpha}{m^2}.
\end{align*}
We take $m$ sufficiently large so that the left term can be upper bounded by $\delta.$
We now turn to comparing $\tilde g_\alpha(\phi)$ and $\phi(\cdot)\left(\int_\cK \sqrt \phi \cdot \mb 1_{\phi< \phi(\cdot)}\right)^\alpha$.
\begin{multline}
\label{eq:psi-trp complete inequality}
    \left|\int_\cK \left[\tilde g_\alpha(\phi,x) - \phi(x)\left(\int_\cK \sqrt \phi \cdot \mb 1_{\phi< \phi(x)}\right)^\alpha \right]dx\right|\\
    \leq \sum_{i\geq1} \left|\int_{f=z_i} \left[\tilde g_\alpha(\phi,x) - \phi(x)\left(\int_\cK \sqrt \phi \cdot \mb 1_{\phi< \phi(x)}\right)^\alpha \right]dx\right|\\
    + \left|\int_{f\notin\{z_i,\;i\geq 1 \}} \left[\phi(x)\left(\int_\cK\sqrt \phi \cdot \mb 1_{f< f(x)}\right)^\alpha  - \phi(x)\left(\int_\cK \sqrt \phi \cdot \mb 1_{\phi< \phi(x)}\right)^\alpha \right]dx\right|.
\end{multline}
Let us analyze the second term in the right-hand side of the inequality.
\begin{align*}
    \left|\int_{f\notin\{z_i,\;i\geq 1\}} \left[ \phi(x)  \left(\int_\cK\right.\right.\right. &\left.\left.\left.\sqrt \phi \cdot \mb 1_{f< f(x)}\right)^\alpha  - \phi(x)\left(\int_\cK \sqrt \phi \cdot \mb 1_{\phi< \phi(x)}\right)^\alpha \right]dx\right|\\
    &\leq \int_{f\notin\{z_i,\;i\geq 1\}} \phi(x)\cdot \alpha \left\|\sqrt\phi\right\|_1^{\alpha-1} \int_\cK \sqrt \phi\cdot \left|\mb 1_{f< f(x)}-\mb 1_{\phi< \phi(x)}\right| dx\\
    &\leq \alpha\iint_{\cK^2} \phi(x)\sqrt{\phi(y)}\left|\mb 1_{f(y)< f(x)}-\mb 1_{\phi(y)< \phi(x)}\right| dx dy\\
    &\leq\alpha  \iint_{\cK^2}\phi(x)\sqrt{\phi(y)} \mb 1_{|f(y)-f(x)|\leq 3\epsilon}\mb 1_{f(x)\neq f(y)}dx dy.
\end{align*}
By the dominated convergence theorem, this term vanishes as $\epsilon\to 0$. We take $0<\epsilon\leq \delta$ sufficiently small such that this term is upper bounded by $\delta$. We now turn to the first term of Equation~\eqref{eq:psi-trp complete inequality}. For $1\leq i\leq Z$, denote by $\eta_i:=\int_\cK \sqrt{f}\cdot \mb 1_{f<z_i}$. Then,
\begin{align*}
    &\sum_i\left|\int_{f(x)=z_i} \left[\tilde g_\alpha(\phi,x) - \phi(x)\left(\int_\cK \sqrt \phi \cdot \mb 1_{\phi< \phi(x)}\right)^\alpha \right]dx\right|\\
    &= \sum_i\left|\int_0^{\mathcal A(f=z_i)} z_i \left(\eta_i+\sqrt{z_i}t\right)^\alpha dt- \int_{f(x)=z_i}\phi(x) \left(\int_\cK \sqrt \phi \cdot \mb 1_{\phi< \phi(x)}\right)^\alpha \right|\\
    &\leq \sum_i z_i \left|\int_0^{h(z_i)} \left(\eta_i+\sqrt{z_i}t\right)^\alpha dt- \int_{f(x)=z_i} \left(\int_\cK \sqrt f \cdot \mb 1_{\phi< \phi(x)}\right)^\alpha \right| +\epsilon \cdot \left\|\sqrt f\right\|_1^{\alpha}\\
    &+ \int_\cK \phi(x)\cdot \alpha \max\left(\left\|\sqrt\phi\right\|_1^{\alpha-1},\left\|\sqrt f\right\|_1^{\alpha-1}\right) \epsilon dx\\
    &\leq \sum_i z_i \left|\int_0^{h(z_i)} \left(\eta_i+\sqrt{z_i}t\right)^\alpha dt- \int_{f(x)=z_i} \left(\int_\cK \sqrt f \cdot \mb 1_{\phi< \phi(x)}\right)^\alpha \right| + (1 + \alpha)\delta.
\end{align*}
For $x\in \cK$ such that $f(x)=z_i$,
\begin{align*}
    \int_\cK \sqrt f \cdot \mb 1_{\phi< \phi(x)} &=  \int_{|f-f(x)|> 3\epsilon} \sqrt f \cdot \mb 1_{f< f(x)} + \int_{|f-f(x)|\leq 3\epsilon} \sqrt f \cdot \mb 1_{\phi< \phi(x)}\\
    &= \int_\cK \sqrt f \cdot \mb 1_{f< z_i-3\epsilon} + \sqrt{z_i}\int_{f=z_i} \mb 1_{\phi< \phi(x)} + \int_\cK \sqrt f \cdot \mb 1_{\phi< \phi(x)}\mb 1_{|f-z_i|\leq 3\epsilon}\mb 1_{f\neq z_i}.
\end{align*}
Therefore,
\begin{equation*}
    \left|\int_\cK \sqrt f \cdot \mb 1_{\phi< \phi(x)} - \eta_i - \sqrt{z_i}\int_{f=z_i} \mb 1_{\phi< \phi(x)}\right| \leq 2 \int_\cK\sqrt f\cdot \mb 1_{|f-z_i|\leq 3\epsilon}\mb 1_{f\neq z_i}.
\end{equation*}
We can use this estimate for the following upper bound.
\begin{multline*}
    \sum_i z_i \left|\int_{f(x)=z_i}\left[\left(\eta_i+\sqrt{z_i}\int_{f=z_i}\mb 1_{\phi<\phi(x)}\right)^\alpha -\left(\int_\cK \sqrt f \cdot \mb 1_{\phi< \phi(x)}\right)^\alpha\right]dx \right|\\
    \leq 2\alpha
    \cdot \sum_i z_i \int_\cK\sqrt f\cdot \mb 1_{|f-z_i|\leq 3\epsilon}\mb 1_{f\neq z_i}.
\end{multline*}
All terms in the right-hand side sum vanish as $\epsilon\to 0$ by the dominated convergence theorem. Furthermore, the total sum is dominated by $2\alpha$. By monotone convergence, the sum vanishes as $\epsilon\to 0$. Let us take $\epsilon>0$ sufficiently small such that the left-hand term is upper bounded by $\delta$. The last term to analyze is
\begin{equation*}
    \sum_i z_i \int_{f(x)=z_i}\left(\eta_i+\sqrt{z_i}\int_{f=z_i}\mb 1_{\phi<\phi(x)}\right)^\alpha dx.
\end{equation*}
Let us take $Z\geq 1$ sufficiently large such that
\begin{equation*}
    \left|\sum_{i>Z} z_i h(z_i)\right|\leq \delta.
\end{equation*}
In particular, we can restrict the analysis to terms $1\leq i\leq Z$ since
\begin{equation*}
    \sum_{i>Z} z_i \left|\int_0^{h(z_i)} \left(\eta_i+\sqrt{z_i}t\right)^\alpha dt- \int_{f(x)=z_i} \left(\int_\cK \sqrt f \cdot \mb 1_{\phi< \phi(x)}\right)^\alpha \right|\leq 2 \sum_{i>Z} z_i h(z_i) \leq 2\delta .
\end{equation*}
Because $\{x:\;f(x)=z_i\}$ is measurable, for any tolerance $\epsilon>0,$ there exists $m_0\geq 1$ arbitrarily large and a set of sub-squares $E_i\subset\{1,\cdots, m^2\}$ such that
\begin{equation*}
    \left\|\mb 1_{f=z_i} - \mb 1_{\bigcup_{k\in E_i}Q
    _k}\right\|_1\leq \frac{\delta}{Z}.
\end{equation*}
for all $1\leq i\leq Z$. Then,
\begin{align*}
    &\sum_{i\leq Z}\left|\int_{f(x)=z_i}\left(\eta_i+\sqrt{z_i}\int_{f=z_i}\mb 1_{\phi<\phi(x)}\right)^\alpha dx - \int_{\bigcup_{k\in E_i}Q
    _k}\left(\eta_i+\sqrt{z_i}\int_{\bigcup_{k\in E_i}Q
    _k}\mb 1_{\phi<\phi(x)}\right)^\alpha dx\right| \\
    &\leq \sum_{i\leq Z}\int_{f(x)=z_i}\left|\left(\eta_i+\sqrt{z_i}\int_{f=z_i}\mb 1_{\phi<\phi(x)}\right)^\alpha - \left(\eta_i+\sqrt{z_i}\int_{\bigcup_{k\in E_i}Q
    _k}\mb 1_{\phi<\phi(x)}\right)^\alpha\right| dx\\
    &+ \sum_{i\leq Z}\int_\cK\left(\eta_i+\sqrt{z_i}\int_{\bigcup_{k\in E_i}Q
    _k}\mb 1_{\phi<\phi(x)}\right)^\alpha \left|\mb 1_{f(x)=z_i} - \mb 1_{\bigcup_{k\in E_i}Q
    _k}\right| dx\\
    &\leq \sum_{i\leq Z} h(z_i)\cdot \sqrt{z_i}\epsilon\cdot \alpha  + \sum_{i\leq Z} \frac{\delta}{Z}\\
    &\leq (1+\alpha)\delta.
\end{align*}
Now note that because values of $\phi$ on each sub-square $\phi_k$ are all distinct, then
\begin{equation*}
    \int_{\bigcup_{k\in E_i}Q
    _k}\left(\eta_i+\sqrt{z_i}\int_{\bigcup_{k\in E_i}Q
    _k}\mb 1_{\phi<\phi(x)}\right)^\alpha dx = \sum_{k=1}^{|E_i|} \frac{1}{m^2} \left(\eta_i+\sqrt{z_i}\frac{k-1}{m^2}\right)^\alpha.
\end{equation*}
Therefore,
\begin{align*}
    & \left| \int_0^{|E_i|/m^2} \left(\eta_i+\sqrt{z_i}t\right)^\alpha dt-\int_{\bigcup_{k\in E_i}Q
    _k}\left(\eta_i+\sqrt{z_i}\int_{\bigcup_{k\in E_i}Q
    _k}\mb 1_{\phi<\phi(x)}\right)^\alpha dx \right|\\
    &\leq \sum_{k=1}^{|E_i|} \frac{1}{m^2} \left[\left(\eta_i+\sqrt{z_i}\frac{k}{m^2}\right)^\alpha- \left(\eta_i+\sqrt{z_i}\frac{k-1}{m^2}\right)^\alpha\right]\\
    &\leq\frac{|E_i|}{m^2} \frac{\alpha}{m^2}(\eta_i+\sqrt{z_i})^{\alpha-1}\\
    &\leq \left(h(z_i)+\frac{\delta}{Z}\right)\frac{\alpha }{m^2}.
\end{align*}
We are now ready to merge all our estimates together.
\begin{align*}
     &\sum_{i\leq Z}\left|\int_0^{h(z_i)} \left(\eta_i+\sqrt{z_i}t\right)^\alpha dt- \int_{f(x)=z_i} \left(\int_\cK \sqrt f \cdot \mb 1_{\phi< \phi(x)}\right)^\alpha \right| \\
     & \leq \sum_{i\leq Z}\frac{\delta}{Z}
     +\sum_{i\leq Z}\left| \int_0^{|E_i|/m^2} \left(\eta_i+\sqrt{z_i}t\right)^\alpha dt-\int_{\bigcup_{k\in E_i}Q
    _k}\left(\eta_i+\sqrt{z_i}\int_{\bigcup_{k\in E_i}Q
    _k}\mb 1_{\phi<\phi(x)}\right)^\alpha dx \right|\\
    &+\sum_{i\leq Z}\left| \int_{\bigcup_{k\in E_i}Q
    _k}\left(\eta_i+\sqrt{z_i}\int_{\bigcup_{k\in E_i}Q
    _k}\mb 1_{\phi<\phi(x)}\right)^\alpha dx - \int_{f(x)=z_i}\left(\eta_i+\sqrt{z_i}\int_{f=z_i}\mb 1_{\phi<\phi(x)}\right)^\alpha dx\right|\\
    &+\sum_{i\leq Z}\left|\int_{f(x)=z_i}\left[\left(\eta_i+\sqrt{z_i}\int_{f=z_i}\mb 1_{\phi<\phi(x)}\right)^\alpha  - \left(\int_\cK \sqrt{f}\cdot \mb 1_{\phi<\phi(x)}\right)^\alpha\right]dx\right|\\
    &\leq (3+\alpha)\delta+ \frac{(1+\delta)\alpha}{m^2}.
\end{align*}
We can then take $m$ sufficiently large so that $\frac{(1+\delta)\alpha}{m^2}\leq \delta$. Finally, going back to Eq~\ref{eq:psi-trp complete inequality},
\begin{equation*}
    \left|\int_\cK \left[\tilde g_\alpha(\phi,x) - \phi(x)\left(\int_\cK \sqrt \phi \cdot \mb 1_{\phi< \phi(x)}\right)^\alpha \right]dx\right|\leq  (8+2\alpha) \delta
\end{equation*}
We now conclude by noting that
\begin{align*}
    \left|\int_\cK g_\alpha(\phi)-g_\alpha(f)\right|&\leq \left|\int_\cK \left[ g_\alpha(\phi,x)-\phi(x)\left(\int_\cK\sqrt{\phi}\cdot\mb 1_{\phi<\phi(x)}\right)^\alpha\right]  dx\right|\\
    &+\left|\int_\cK \left[\phi(x)\left(\int_\cK \sqrt \phi \cdot \mb 1_{\phi< \phi(x)}\right)^\alpha - \tilde g_\alpha(\phi,x) \right]dx\right|\\
    &+\left|\int_\cK \tilde g_\alpha(\phi) - g_\alpha(f)\right|\\
    &\leq  (10+3\alpha) \delta
\end{align*}
This is true for any $\delta>0$. This ends the proof of the lemma.
\end{proof}

\bibliographystyle{dinat}
\bibliography{references}

\end{document}